\newcommand{\splitatcommas}[1]{%
	\begingroup
	\begingroup\lccode`~=`, \lowercase{\endgroup
		\edef~{\mathchar\the\mathcode`, \penalty0 \noexpand\hspace{0pt plus 1em}}%
	}\mathcode`,="8000 #1%
	\endgroup
}
\renewcommand{\paragraph}{\@startsection{paragraph}{5}{0em}%
  {.7ex plus .2ex minus .1ex}%
  {-.5em}%
  {\bfseries}}
\def\orcidID#1{\smash{\href{http://orcid.org/#1}{\protect\raisebox{-1.25pt}{\protect\includegraphics{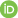}}}}}
\Crefname{equation}{eq.}{eqs.}
\crefname{equation}{equation}{equations}
\Crefname{figure}{Fig.}{Figs.}
\crefname{figure}{figure}{figures}
\Crefname{tabular}{Table}{Tables}
\crefname{tabular}{table}{tables}
\Crefname{definition}{Def.}{Defs.}
\crefname{definition}{definition}{definitions}
\Crefname{proposition}{Prop.}{Props.}
\crefname{proposition}{proposition}{propositions}
\Crefname{section}{Sec.}{Sections}
\crefname{section}{section}{sections}
\Crefname{subsection}{Sec.}{Sections}
\crefname{subsection}{subsection}{subsections}
\crefname{algorithm}{algorithm}{algorithms}
\crefname{listing}{code}{code\ blocks}
\newcommand{\vect}[1]{\vec{#1}}
\newcommand{\from}{\colon}
\newcommand{\N}{\mathbb{N}}
\newcommand{\R}{\mathbb{R}}
\newcommand{\Qpos}{\mathbb{Q}_{\geqslant 0}}
\newcommand{\Rpos}{\mathbb{R}_{\geqslant 0}}
\newcommand{\RPos}{\mathbb{R}_{> 0}}
\newcommand{\prop}{\ensuremath{\varphi}\xspace}
\newcommand{\est}{\ensuremath{\hat{p}}\xspace}
\newcommand{\relerr}{\ensuremath{\varepsilon}\xspace}
\newcommand{\age}{\tau_{\mathit{age}}}
\newcommand{\agetau}{{\langle \age,\vect{\tau} \rangle}}
\newcommand{\agetaup}{{\langle \age',\vect{\tau}' \rangle}}
\newcommand{\tstate}{\langle m,D,f_\agetau \rangle}
\newcommand{\tstatep}{\langle m',D',f'_\agetau \rangle}
\newcommand{\F}{F}
\newcommand{\W}{W}
\newcommand{\M}{\mathcal M}
\DeclareMathAlphabet{\mathsc}{OT1}{cmr}{m}{sc}
\newcommand{\dist}[1]{\mathsc{\MakeLowercase{#1}}}
\mathchardef\mhyph="2D 
\let\emptyset\varnothing
\newcommand{\hide}[1]{}
\newcommand{\bfred}[1]{\textsf{\bfseries\color{Red}#1}\xspace}
\colorlet{colcodekey}{PineGreen!60!green!70!black}
\colorlet{colcodeidentifier}{Sepia}
\colorlet{colcodenumeric}{blue!25!black}
\colorlet{colcodesyncaction}{violet!90}
\colorlet{colcodecomment}{Black!60}
\colorlet{coldistribution}{Cerulean!60!Black}
\lstdefinelanguage{Kepler}{  
  keywords={},  
  keywords=[2]{toplevel,and,or,pand,vot,fdep,spare,be,sbe,rbox,seq,csp,wsp,hsp},
  keywords=[3]{fail,repair,dorm,prio,fcfs,rand,lambda,prob,res,rate,mean},
  keywords=[4]{exponential,erlang,uniform,normal,lognormal,%
               weibull,gamma,rayleigh,fsteq,dirac,%
               exp,uni,dir},
  otherkeywords={},
  morecomment=**[l]{//},      
  morecomment=**[s]{/*}{*/},
  moredelim=**[is][\color{colcodenumeric}]{^}{^},     
  moredelim=**[is][\color{colcodekey}]{-|}{|-},       
  moredelim=**[is][\slshape]{__}{__},                 
}
\lstdefinestyle{Kepler}{
  language=Kepler,
  xleftmargin=2em,
  basewidth=0.5em,
  basicstyle={\scriptsize\ttfamily},
  identifierstyle=\color{colcodeidentifier},
  keywordstyle=\bfseries,
  keywordstyle=[2]\color{colcodesyncaction},
  keywordstyle=[3]\bfseries\color{colcodekey},
  keywordstyle=[4]\bfseries\color{coldistribution},
  commentstyle=\color{colcodecomment},
  stringstyle=\mdseries,
  showstringspaces=false,
  numbers=left,
  numberstyle=\scriptsize\ttfamily\color{colcodecomment},
  numbersep=0.9em,
  tabsize=4,
  frame=none,
  aboveskip=\bigskipamount,
  belowskip=\medskipamount,
  abovecaptionskip=\smallskipamount,
  belowcaptionskip=\smallskipamount,
  captionpos=b,
  escapeinside={`}{`},
  mathescape=true,  
}
\newcommand{\markchange}[1]{#1}  
\begin{document}
\title{Transient Evaluation of Non-Markovian Models by Stochastic State Classes and Simulation}
\titlerunning{Transient Evaluation of Non-Markovian Models by SSCs and Simulation}

%
%
\author{Gabriel~Dengler\inst{1}\orcidID{0000-0002-4217-4952} \and
Laura~Carnevali\inst{2}\orcidID{0000-0002-5896-4860} \and\newline
Carlos E.~Budde\inst{3}\orcidID{0000-0001-8807-1548}\and
Enrico~Vicario\inst{2}\orcidID{0000-0002-4983-4386}}
\authorrunning{G. Dengler et al.}
%
\institute{Saarland University, Saarbrücken, Germany
\\\email{dengler@depend.uni-saarland.de}
\and
Department of Information Engineering, University of Florence, Florence, Italy
\and
DISI, University of Trento, Trento, Italy
}
\maketitle              

\begin{abstract}
\markchange{Non-Markovian models have} great expressive power, at the cost of complex analysis of the stochastic process.
%
%
The method of Stochastic State Classes~(SSCs) derives closed-form analytical expressions for the joint Probability Density Functions (PDFs) of the active timers with marginal expolynomial PDF,  though being hindered by the number of concurrent non-exponential timers and of discrete events between regenerations.
\markchange{%
Simulation is an alternative capable of handling the large class of PDFs samplable via inverse transform, which however suffers from rare events.}
%
\markchange{We combine} 
these approaches to analyze time-bounded transient properties of non-Markovian models. 
We enumerate SSCs near the root of the state-space tree and then rely on simulation to reach the \markchange{target}, 
%
\markchange{affording} transient evaluation of models for which the method of SSCs is not viable while reducing computational time and variance of the estimator of transient probabilities with respect to simulation.
Promising results are observed in the estimation of rare event probabilities. 


\end{abstract}

\hide{Models with concurrent non-exponential timers provide great expressive power, at the cost of complex analyses of the underlying stochastic process.
	Stochastic State Classes~(SSCs) represent the Probability Density Function~(PDF) of the active timers as expolynomials, deriving analytical closed-form expressions to analyse these PDFs.
	However, SSC analysis is hindered by the number of concurrent timers and discrete events between regenerations, i.e.\ Markovian states.
	Simulation is an alternative which can handle the large class of PDFs samplable via inverse transform, but which instead suffers from rare events and the consequential runtime explosion.
	In this paper, we exploit the complementarities of these solution methods to analyse transient properties on non-Markovian models. 
	We enumerate SSCs only near the root of the state-space tree, and then rely on simulation to reach the event of interest.
	The approach affords transient evaluation of models for which the method of SSCs is not viable, while reducing computational time and variance of the estimator of transient probabilities with respect to simulation.
	We also observe promising results to estimate rare event probabilities, outlining further research directions in this regard.}


\hide{
Models having multiple concurrent non-Exponential timers, possibly with strictly bounded support, provide expressivity power while incurring notable complexity in the analysis of the underlying stochastic process.
The method of Stochastic State Classes~(SSCs) addresses evaluation by deriving the analytical closed-form expression of the Probability Density Function~(PDF) of the active timers after each discrete event, assuming that non-Exponential PDFs are expolynomials  (i.e., sums of products of exponential and polynomial terms). However, it suffers from the number of concurrent non-Exponential timers and the number of discrete events between regenerations, i.e.,~states satisfying the Markov condition.
On the other hand, simulation has no limitations in the PDF of non-Exponential timers while suffering rare events and computational cost of sampling beyond applicability of the inverse transform method.

In this paper, we exploit the complementarities of the two solution methods to perform transient analysis of this class of non-Markovian models. 
To this end, we perform exhaustive enumeration of SSCs only near the root of the state-space tree, and then we rely on simulation. 
The approach affords transient evaluation of models for which the method of SSCs is not viable, while reducing computational time and variance of the estimator of transient probabilities with respect to simulation.
The approach also shows promising results for estimation of the probability of rare events, outlining further research directions in this regard.
}

\hide{
Non-Markovian models with multiple concurrent deterministic~(DET) and generally distributed~(GEN, i.e.,~non-Exponential) timers capture characteristics of a large class of systems, e.g.,~durations of aging and failure processes in software systems, synchronous and asynchronous task arrival times in real-time systems.  
This expressivity power comes with the complexity of quantitative evaluation of the underlying stochastic process, for both analysis and discrete-event simulation.
On the one hand, the method of Stochastic State Classes~(SSCs) relies on deriving the analytical closed-form expression of the Probability Density Function~(PDF) of the active timers after each discrete event, considering GEN timers with expolynomial distribution (i.e.,~sum of products of exponential and polynomial terms), and suffering the number of concurrent GEN timers and of discrete events between regenerations, i.e.,~states satisfying the Markov condition.
On the other hand, simulation does not encouter limitations in the distributions of GEN timers while suffering the presence of rare events and the computational cost of sampling GEN distributions beyond applicability of inverse transform sampling.

In this paper, we exploit the complementary nature of the two solution methods to address transient evaluation of non-Markovian models having multiple concurrent DET and GEN timers, performing exhaustive enumeration of SSCs only near the root of the state-space tree, and then relying on simulation. 
We show that this approach affords transient evaluation of models for which the method of stochastic state classes is not viable, while reducing the computational time and the variance of the estimator of transient probabilities with respect to simulation.
The approach also shows promising results for the estimation of the probability of rare events, outlining further research directions in this regard.}

\hide{
\acused{SSC}

In previous research, the analysis of \acp{STPN}, a non-Markovian model with generalized stochastic distributions, has proven difficult. Stochastic state classes (\acp{SSC}), which use \aclp{DBM} and \aclp{JPD} to analytically describe the stochastic properties of each state, have been proposed as an analysis method for \acp{STPN}. However, they suffer from high computational costs, limiting their applicability to large-scale models. Meanwhile, simulation-based techniques can work on arbitrary model sizes, but introduce major uncertainties during analysis.

With our research, we aim to enable the analysis of large-scale \acp{STPN} by exploiting the complementary nature of \acp{SSC} and discrete event simulation. In our approach, we run exhaustive SSC analysis only at the root of our analysis tree and employ simulation after it. We show---mathematically as well as with \textcolor{RubineRed}{\bfseries\sffamily experimental experimentation}---that this approach can reduce the variance of the probability estimators significantly and quantify the corresponding gain. As indicated in one of our evaluation examples, the developed method shows promising results for the estimation of rare events when dealing with low-rate transitions, which outlines further research directions in this regard.
}

\acresetall

\hide{
\begingroup
\def\hl#1{~\textbf{\color{BrickRed}(#1)}\xspace}
\color{NavyBlue}
\textbf{Proposed ToC}
\ttfamily
\begin{enumerate}
\item   Intro \hl{todo}
        \begin{enumerate}
        \item   Related work \hl{todo}
        \end{enumerate}
\item   Background
        \begin{enumerate}
        \item   SSCs \hl{\S~2}
        \item   The connection between simulation and SSCs \hl{\S~2.3}
        \end{enumerate}
\item   SSC $\rightarrow$ sampling
        \begin{enumerate}
        \item Conditioning (cutting out) the region of interest \hl{todo}
        \item Sampling from common density
        \begin{enumerate}
                \item Metropolis-Hastings \hl{\S~3.1}
                \item Importance Sampling \hl{\S~3.2}
        \end{enumerate}
        \item Incorporation of Exponential and Deterministic distributions \hl{\S~3.3}
        \end{enumerate}
\item   Confidence Intervals \hl{\S~5.1}
\color{WildStrawberry}
\item   Experiments \hl{from MSc project + CIs}
        \begin{enumerate}
        \item Experimental setup ($\rightarrow$ Very simple strategy, analytical expansion near root until depth $d$, how many simulation offspring from each SSC, etc.)
        \item Experiments
                \begin{enumerate}
                \item Parallel Failing Servers (both for now)
                \item ETCS (our small setting)
                \end{enumerate}
        \end{enumerate}
\item   Discussion: why are things working this way \hl{todo}
        \begin{enumerate}
        \item   Hindsight and bigger picture (cases 1. 2. 3. analysis, with Python script insights)
        \item   Where will be heading very very soon? 
        \end{enumerate}
\item   Conclusion \hl{todo}
\end{enumerate}
\endgroup
}

\renewcommand{\thelstlisting}{\arabic{lstlisting}}  

\section{Introduction}
\label{sec:intro}

\markchange{Quantitative evaluation of stochastic timed models}
is a difficult problem.
%
%
While the Markovian case counts with time-tested analytical and numerical solutions~\cite{Gra77,uniformization-algorithm,BK08,BP19}, non-Markovian models are much harder to analyze~\cite{supplementary-variables-1, supplementary-variables-2}. 

\paragraph{Motivation.}

\hide{\color{red}Comment from Laura: In Morivation paragraph, I would remark that we address transient analysis of non-Markovian models with multiple concurrent \ac{GEN}, i.e., non-Exponential, timers, given that (i) this is an expressive class of models (capable of representing several application contexts) and that (ii) relevant properties of interest can be expressed by transient rewards (e.g., system reliability, as discussed below). }

Our main research goal is to \emph{quantify time-bounded transient properties of non-Markovian systems}.
Specifically, non-Markovian models with multiple concurrent timers having non-Exponential \ac{GEN} distributions~\cite{stochastic-state-classes} capture characteristics of a large variety of systems, such as real-time systems, cyber-physical systems, and software subject to aging. Notably, they have been used to define quantitative safety and liveness properties of safety-critical sys\-tems---e.g.\ in aerospace, railway, and nuclear industries \cite{Rob00,LM21,performance-ertms-ects-1,ZZZ21}---and to give semantics to RAMS standards---reliability, availability, maintainability, and safety \cite{OK11,RAMS-conference}---including fault tree analysis and reliability block diagrams \cite{faultflow, CCF+21}.

\hide{
Petri nets are a versatile modelling formalism for the task, with multiple concurrent transitions whose firing times can be described by non-Exponential \ac{GEN} timers, each following a continuous arbitrary distribution \cite{gspn-paper,MBB+89}.
These \acp{STPN} have been used to quantify safety and liveness properties of safety-critical sys\-tems---e.g.\ in the aerospace, railway, and nuclear industries \cite{Rob00,LM21,performance-ertms-ects-1,ZZZ21}---and to give semantics to RAMS standards---reliability, availability, maintainability, and safety \cite{OK11,RAMS-conference}---including fault tree analysis and reliability block diagrams \cite{CCF+21,JKSV18}.}

The expressive power of non-Markovian models comes at the cost of  complex analysis of the underlying stochastic process to evaluate the properties of interest. 
A relevant example in RAMS engineering is the evaluation of the \emph{system reliability}, i.e.,~the probability~$p$ to observe an undesired event during mission time, which is a time-bounded safety property \prop to check on the model~\cite{OK11,ZZZ21}.
Numerical algorithms such as \ac{VI} can approximate this quantity for Markovian systems~\cite{FKNP11,BK08,CH08}, via exhaustive explorations of the states of the model~\cite{FKNP11,HHH+19}.
For \ac{GEN} transitions, \ac{VI} needs approximations such as phase-type distributions~\cite{You05,Neu81}.
However, besides approximation errors, this exacerbates the state-space explosion problem, which hinders numeric algorithms like \ac{VI} and ultimately renders them unfit to study non-Markovian models~\cite{FKNP11,Har15}.

\hide{A main concern in RAMS is quantifying \emph{system reliability}: the probability $p$ to observe an undesired event during mission time, which is a time-bounded safety property \prop to check on the \ac{STPN} model \cite{OK11,ZZZ21}.
Numerical algorithms such as \ac{VI} can approximate this quantity for Markovian systems \cite{FKNP11,BK08,CH08}, via exhaustive explorations of the states that make up the \ac{STPN} semantics \cite{FKNP11,HHH+19}.
For \ac{GEN} transitions, \ac{VI} needs approximations such as phase-type distributions \cite{You05,Neu81}.
However, besides approximation errors, this exacerbates the state-space explosion problem that hinders numeric algorithms like \ac{VI}, and which ultimately renders them unfit to study non-Markovian models \cite{FKNP11,Har15}.
}


\paragraph{Related Work.}


Analytical quantification of properties in non-Markovian models is viable only under restrictions on the class of \ac{GEN} distributions, and on the number of concurrent timers in the stochastic process states~\cite{ciardo1994characterization}.
For models subtending a \ac{MRP}~\cite{kulkarni2016modeling},
most approaches address the subclass where up to one \ac{GEN} timer is enabled in each state, i.e.\ the \emph{enabling restriction}~\cite{choi1994markov,german1995transient,amparore2013component}.
The method of supplementary variables~\cite{telek2001transient,supplementary-variables-1} does not require this restriction theoretically, but is impractical without it~\cite{telek2001transient}.
In contrast, sampling the stochastic process at equidistant time points can overcome the enabling restriction~\cite{lindemann1999transient,zimmermann2012modeling}, but requires timers to follow a \ac{DET} or \ac{EXP} distribution. 
The compositional approach of~\cite{CGSV22} does not restrict the number of \ac{GEN} timers either but requires the underlying stochastic process to be decomposable into a hierarchy of \acp{SMP}.

Also, the method of \acp{SSC}~\cite{stochastic-state-classes,HORVATH2012315,sirio-github-code} can address models with multiple concurrent \ac{GEN} timers, provided that a regeneration is always reached in a bounded number of discrete events (the \emph{bounded regeneration restriction}).
\acp{SSC} are restricted to \ac{GEN} timers in the expolynomial class---sum of products of \ac{EXP} and polynomials---which includes \ac{EXP}, uniform, triangular, and Erlang distributions.
The approach derives the closed-form expression of the \ac{PDF} of the active timers after each discrete event and its applicability is hindered by large numbers of concurrent \ac{GEN} timers and of discrete events between regenerations.

Simulation can also quantify a (transient) property \prop on non-Markovian systems \cite{YS02}.
When formal system models such as \acp{STPN} are available, this is called \ac{SMC} \cite{LL16}.
\ac{SMC} can study any stochastic system whose stochastic kernel is known, and from which samples can be drawn, e.g.\ via the inverse transform method \cite{AP18,LL16,ZBC12}.
In contrast to analytical methods like \ac{SSC} analysis, which provide an exact value for the quantity $p$ characterized by \prop, sequential \ac{SMC} generates simulation traces to produce an \emph{estimate} $\langle\est,\relerr\rangle$ s.t.\ $\est\in p\pm\relerr$ with some desired probability $\delta$.
However, this is hindered by \emph{rare events}: when the property \prop to be quantified requires simulation traces to visit states that occur with very low probability, then \relerr or the number of traces explodes, rendering standard simulation useless \cite{RT09b,ZBC12}.

\Ac{RES} tackles such problems, where \ac{ISPLIT} has been used in \ac{SMC} to quantify rare transient properties of non-Markovian systems \cite{BDHS20,LLLT09}.
\ac{ISPLIT} splits the state space ${\mathcal{S}=\uplus_{i=0}^n \mathcal{S}_i}$ to estimate the conditional probabilities $p_i$ of reaching a state in $\mathcal{S}_i$ from $\mathcal{S}_{i-1}$, where the states satisfying \prop are in $\mathcal{S}_n$ and $p=\prod_{i=1}^n p_i$ \cite{LLLT09}.
This works when all the estimates $\est_i$ can be approached via crude \ac{MC}, which rules out rare events caused by single transitions of very low probability, e.g.\ \acp{EXP}
\markchange{%
with a very low rate.
}
Such cases can be tackled by \ac{IS}, which changes the \acp{PDF} $f$ of concurrent timers for a proposed $\tilde{f}$, making it more likely to observe states that satisfy \prop \cite{LMT09}.
An unbiased estimate \est is then obtained by multiplying the result of \ac{MC} by the \emph{likelihood ratio} $\sfrac{f}{\tilde{f}}$.
The drawback of \ac{IS} is that $\tilde{f}$ is problem-dependent, usually defined ad hoc, and bad choices result in worse-than-\ac{MC} convergence \cite{LMT09}.
Automatic $\tilde{f}$ selection is restricted to specific distributions (mainly \ac{EXP}) and model structures \cite{BBS21,RBSJ18}, and even adaptive \ac{IS} approaches such as cross-entropy require non-trivial parameter tuning \cite{BKMR05}.

\hide{
Let \prop denote the transient property: SMC can yield an estimate \est of the actual probability $p\in[0,1]$ with which the STPN model satisfies \prop.
Besides \est, SMC quantifies the statistical error via two numbers, $\delta\in(0,1)$ and $\varepsilon>0$, s.t.\ $\est\in p\pm\varepsilon$ with probability $\delta$.
Thus, for $n\in\N$ drawn samples, the SMC outcome is $\langle n,\est,\delta,\varepsilon \rangle$: this is usually output as a \emph{confidence interval} (CI), that quantifies the quality of the SMC estimate \cite{Ric07}.

Higher quality means more precision (smaller~$\varepsilon$) or more confidence (bigger~$\delta$).
Usually $\delta$ is fixed: then by the law of large numbers, the more samples drawn, the narrower the CI.
So $n$ is inversely proportional to $\varepsilon$ and thus to the CI width, meaning that SMC trades memory for either runtime or precision when compared to exhaustive methods~\cite{BDHS20,AP18}.
When \prop describes an event that occurs with low probability, e.g.\ failures in a (safety-critical) fault-tolerant system, this generates a runtime explosion in what is known as the rare event problem \cite{RT09b,ZBC12}.

This is analogous to the memory explosion of analytical and numerical solutions, in that it renders SMC impractical to study certain types of systems.
The field of rare event simulation studies this problem, were Importance Sampling and Splitting are two of the main approaches \cite{LMT09,LLLT09,VAVA91}.
However, even such methods have restricted application scopes, in general being limited by the need to craft implementations ad hoc to the problem at hand \cite{RT09b,BDHS20}.
}

\paragraph{Contributions.}
In this paper, we evaluate time-bounded transient properties of non-Markovian systems by combining state-space analysis via \acp{SSC} with simulation, capturing rare events while not incurring state-space explosion~\cite{FKNP11,stochastic-state-classes}. 
To the best of our knowledge, this solution has never been attempted in formal \ac{SMC} frameworks~\cite{ABB+24}. 
To approach it, we enumerate \acp{SSC} near the root of the state-space tree and then perform simulation from there on, deriving \acp{CI} for the probability that a property~\prop of interest is satisfied.
Experimental results show that the approach enables evaluation for models for which the method of \acp{SSC} is not viable, notably reducing computational time and variance of the probability estimator with respect to both \ac{MC} simulation and \ac{IS}.
Moreover, promising results are obtained in the estimation of rare event probabilities, opening the way to further research directions.

\hide{Analytical solutions e.g.\ via \acp{SSC} do not suffer from rare events \cite{FKNP11,stochastic-state-classes}.
To overcome their state-space explosion limitations, one could apply them up to a predefined expansion level, and then use simulation to estimate reachability of the event of interest.
To the best of our knowledge this has not yet been attempted in a formal \ac{SMC} framework \cite{ABB+24}---we approach it as follows:
}

In the rest of the paper, first we recall background concepts (\Cref{sec:background}). Then, we present our approach (\Cref{sec:introducing-switch-sscs}) and derive \acp{CI} for properties of interest (\Cref{sec:results-confidence-intervals}). Finally, we present experimental results (\Cref{sec:experimental-evaluation}) and draw conclusions (\Cref{sec:conclusion}).
Additional experimental and implementation details are in \Cref{sec:pn-dft-example,sec:algorithmic-desc-sampling}.


\section{Background}\label{sec:background}

In this section, we recall \acp{STPN} (\Cref{subsec:stpn}) as well as aspects of the method of \acp{SSC} (\Cref{subsec:ssc}) and \ac{MC} simulation (\Cref{subsec:mcs}) that are relevant for our work.

\subsection{Stochastic Time Petri Nets}\label{subsec:stpn}

\markchange{\acp{STPN}~\cite{stochastic-state-classes,oris21} model concurrent systems with stochastic durations and discrete probabilistic choices.
As shown in \Cref{fig:stpn}, an \ac{STPN} consists of: 
places with tokens (circles with dots), modeling the discrete logical state; 
transitions (bars) modeling activities with stochastic duration;
directed arcs (directed arrows), from input places to transitions and from transitions to output places, modeling precedence relations among activities.
A transition is enabled by a marking (i.e.,~an assignment of tokens to places) if each of its input places contains at least one token, and its enabling function (``? expression") evaluates to true.} 


Upon enabling, a transition samples a time-to-fire from its \ac{CDF}, i.e.,~\ac{EXP},  \ac{GEN}, or the generalized  \ac{CDF} of a Dirac delta function, 
where transitions with zero time-to-fire are called \ac{IMM}
%
%
\markchange{(in \Cref{fig:stpn}, \ac{IMM} and \ac{GEN} transitions are drawn as thin and thick vertical bars, respectively).}
%
%
The transition with minimum time-to-fire fires, removing one token from each of its input places, adding one token to each of its output places, and applying its update function, i.e.,~an assignment of tokens to each place, defined by a marking expression \markchange{(not present in \Cref{fig:stpn}).}
%
Ties (i.e.,~limit cases of synchronization among DET transitions with the same time-to-fire, e.g., occurring when they are enabled at the initial time) are solved by a random switch determined by probabilistic weights of transitions \markchange{(not present in \Cref{fig:stpn}).}

\begin{figure}[b!]
	\centering
	\includegraphics[width=0.8\textwidth]{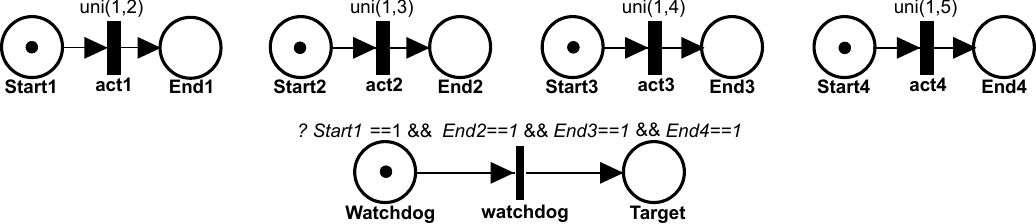}
	\caption{STPN modeling four parallel overlapping activities\label{fig:stpn}}
\end{figure}

\Cref{fig:stpn} shows the STPN of four parallel activities with uniformly distributed duration, modeled by the \ac{GEN} transitions \texttt{act1}, \ldots, \texttt{act4}. The \ac{IMM} transition \texttt{watchdog} fires as soon as activities represented by \texttt{act2}, \texttt{act3}, and \texttt{act4} have been completed while the activity represented by \texttt{act1} is still ongoing.

The SIRIO library~\cite{sirio-github-code} implements syntax and semantics of \acp{STPN} where \ac{GEN} transitions have \textit{expolynomial} \ac{CDF}~\cite{trivedi2009sharpe} (i.e.,~sums of products of exponentials and polynomials), including \ac{EXP}, uniform, triangular, and Erlang \acp{CDF}.

\subsection{Transient Evaluation by the Method of Stochastic State Classes}\label{subsec:ssc}

An \ac{SSC}~\cite{stochastic-state-classes} comprises a marking, a joint support, and a \ac{PDF} for vector~$\agetau$ encoding times-to-fire of the $G$ enabled transitions and the absolute elapsed time (the ``age'' $\age\in\R_{\geq0}$). Given an \ac{SSC}~$\Sigma$, a succession relation provides the joint support and the joint PDF of~$\agetau$ conditioned on the firing of a transition~$\gamma$.

\begin{definition}[\ac{SSC}]\label{def:tsclass}
	An \ac{SSC} is a tuple
	$ \Sigma = \langle m,D_{\langle \age,\vect{\tau}\rangle},f_{\langle \age,\vect{\tau} \rangle}\rangle$ where:
	$m\in \M$ is a marking;
	$f_\agetau$ is the PDF (immediately after the previous firing) of the random vector $\agetau$ including the age timer $\age$ and the times-to-fire $\vect{\tau}$ of transitions enabled by $m$;
	and, $D_{\langle \age,\vect{\tau}\rangle} \subseteq \R^{G+1}$ is the
	support of $f_\agetau$.
\end{definition}

\begin{definition}[Succession relation]\label{def:successor}
	$\Sigma'=\tstatep$ is the successor of $\Sigma=\tstate$ through transition~$\gamma$
	with probability~$\mu$ (i.e.,~$\Sigma\stackrel{\gamma,\mu}{\longrightarrow}\Sigma'$), 
	if, given that the marking is~$m$ and $\agetau$ is distributed over $D$ according to $f_{\agetau}$, then the firing of $\gamma$ has probability $\mu>0$ in $\Sigma$ and yields marking $m'$ and vector of times to fire $\agetaup$ distributed over $D'$ according to
 $f'_{\agetau}$.
\end{definition}
From an initial \ac{SSC} where the times-to-fire of the enabled transitions are independently distributed, the relation $\stackrel{\gamma,\mu}{\longrightarrow}$ can be enumerated by computing firing probabilities and resulting joint supports $D'$ and \acp{PDF}  $f'_{\agetau}$ of vector~$\agetau$:
$D'$ is a \ac{DBM}, i.e.,~solution of a set of linear inequalities constraining the difference between two times-to-fire; 
for models with expolynomial \ac{GEN} transitions,  $f'_{\agetau}$ takes piecewise analytical form (i.e.,~\textit{multivariate expolynomial}) over a partition of $D'$ in DBM sub-zones~\cite{bernstein-polynomials-1}.
Enumeration of $\stackrel{\gamma,\mu}{\longrightarrow}$ yields a transient tree: nodes are \acp{SSC} and edges are labeled with transitions and their firing probabilities~\cite{HORVATH2012315}.
Depending on the number of concurrent non-\ac{EXP} transitions, after a large number of firings, the number of DBM-subzones may significantly increase, leading to a runtime explosion.
%
For \ac{MRP} models under the bounded regeneration restriction, the problem is largely mitigated by enumerating \acp{SSC} between any two regenerations~\cite{HORVATH2012315}, i.e.,~\acp{SSC} where all transitions are newly enabled or enabled by a \ac{DET} time, and thus $f'_{\agetau}$ takes the same analytical representation over the entire domain.
%


Given initial marking~$m_0$ and \ac{PDF}~$f_{\langle \age,\vect{\tau}\rangle}$ for $\langle \age,\vect{\tau}\rangle$, the STPN semantics induces a probability space $\langle \Omega_{m_0}, \mathbb{F}_{\langle \age,\vect{\tau}\rangle}, \mathbb{P}_{m_0, f_{\langle \age,\vect{\tau}\rangle}}\rangle$: $\Omega_{m_0}$ is the set of feasible timed firing sequences and $\mathbb{P}_{m_0, f_{\langle \age,\vect{\tau}\rangle}}$ is a probability measure over them~\cite{paolieri2015probabilistic}. 



An \ac{STPN} identifies a \ac{TPN}~\cite{berthomieu1991modeling,lime2003expressiveness} with same set of outcomes $\Omega_{m_0}$. 
The state $\langle m, \langle \age,\vect{\tau}\rangle \rangle$ of a \ac{TPN} encodes marking $m\in \M$ and vector $\langle \age,\vect{\tau}\rangle$ of the age timer and the times-to-fire of the enabled transitions. 
The state space is covered by \acp{SC}, each \ac{SC} $S = \langle m, D\rangle$ encoding marking~$m$ and joint support~$D$ for~$\langle \age,\vect{\tau}\rangle$.
A reachability relation is defined between \acp{SC}: $S' = \langle m', D'\rangle$ is the successor of $S = \langle m, D\rangle$
\markchange{%
via transition~$t$ if, from marking $m$ and $\langle \age,\vect{\tau}\rangle$ supported over~$D$, $t$ fires in $S$ and yields}
marking $m'$ and vector $\langle \age,\vect{\tau}\rangle'$ supported over~$D'$.
From initial marking~$m_0$ and domain~$D_0$ for~$\langle \age,\vect{\tau}\rangle$, \ac{SC} enumeration yields a \ac{SCG} encoding the set of outcomes~$\Omega_{m_0}$, enabling correctness verification of the TPN.
%

In SSC $\Sigma=\tstate$, an enabled transition~$\gamma$ has null firing probability iff domain $D$ conditioned on $\gamma$ firing first has a non-null measure. Therefore, firings having null probability can be excluded from the \ac{SCG}, which can then be used to determine reachability between \acp{SSC}, i.e.,~\ac{SSC}~$\Sigma'=\tstatep$ is reachable from \ac{SSC}~$\Sigma=\tstate$ iff \ac{SC}~$S'= \langle m', D'\rangle$ underlying $\Sigma'$ is reachable from \ac{SC} $S= \langle m, D\rangle$ underlying $\Sigma$. 
According to this, in the verification of time-bounded transient properties (e.g.,~probability that a marking condition is satisfied by time~$T$),  successor \acp{SSC} are enumerated iff target \acp{SSC} (i.e.,~those satisfying the property of interest) are reachable from them, which can be decided on the \ac{SCG}, as just discussed. 
For very complex models for which the \ac{SCG} enumeration is not viable, the marking graph could likely be enumerated (i.e.,~the graph encoding the reachability relation between markings), so as to avoid computation of the \acp{SSC} from which the target \acp{SSC} cannot be reached regardless of timing constraints.
Alternatively, if the \ac{SCG} not encoding the age variable can be enumerated, it could still be used to detect the \acp{SSC} from which the target \acp{SSC} are not reachable regardless of the elapsed time.

%
%
%


\subsection{Monte Carlo Simulation and Importance Sampling}\label{subsec:mcs}


\ac{MC} simulation performs $n$~independent executions of an \ac{STPN}, estimating the probability~$p_{\prop}(t)$ of marking condition~$\prop$ at time~$t$ as the fraction of executions that satisfy~$\prop$ at time~$t$.
The state of an \ac{STPN} is characterized by a \ac{RV}~$Y$ with support on the state space $\mathcal{S}$, where the samples $y_i$ for $i\in \{1, 2, ..., n\}$ describe a specific state.
$p_{\prop}(t)$ is the mean~$\mu$ of a Bernoulli distributed \ac{RV}~$X = \Psi(Y)$, where $\Psi : \mathcal{S} \rightarrow \{0, 1\}$, whose independent samples $x_i$ for $i\in \{1, 2, ..., n\}$ are equal to either $1$ or $0$ depending on whether $\prop$ is satisfied at time~$t$ of the $i$-th execution or not, respectively.
According to this, the mean $\mu$ and variance $\sigma^2$ of~$X$ can be estimated as the sample mean $\overline{X}_\emph{sim}=\sum_{i=1}^{n}x_i / n$ and the variance $\Tilde{\sigma}_\emph{sim}^2 = \overline{X}_\emph{sim} (1 - \overline{X}_\emph{sim})$, respectively.
When using \ac{IS}~\cite{importance-sampling}, samples are associated with a likelihood $L : \mathcal{S} \rightarrow \mathbb{R}_{\geq 0}$ to compensate for the change of \ac{PDF}. Here, we estimate the mean $\mu$ of the reward with the sample mean $\overline{X}_\emph{IS} = \sum_{i = 1}^{n} L(y_{i})\Psi(y_{i})/n$ and the variance $\sigma^2$ with the sample variance~\cite{LMT09}:
\begin{align}
    \Tilde{\sigma}_\mathit{IS}^2 = \frac{1}{n - 1} \sum_{i=1}^{n} \Psi^2(y_{i}) L^2(y_{i}) - \frac{n}{n - 1} \overline{X}_\emph{IS}^2
\end{align}
By the central limit theorem, for large enough $n$, the distribution of $\sqrt{n}(\overline{X}_\emph{sim}-\mu)/\sigma$ converges to the standard normal distribution, which also applies to $\overline{X}_\emph{IS}$.
Therefore, the \ac{CI} for the target probability~$p_{\prop}(t)$  can be derived as: 
\begin{align}
	\left[\overline{X}_\emph{sim} - z_{\alpha/2}\cdot\frac{\Tilde{\sigma}_\emph{sim}}{\sqrt{n}}, \overline{X}_\emph{sim} + z_{\alpha/2}\cdot\frac{\Tilde{\sigma}_\emph{sim}}{\sqrt{n}}\right]\text{,}
\end{align}
where $z_{\alpha/2}$ corresponds to the $\alpha/2$-quantile of the standard normal distribution.


\hide{
{\color{Red}Comment from Gabriel: $\overline{\sigma}_\emph{sim}^2 = \overline{X}_\emph{sim} (1 - \overline{X}_\emph{sim})$ is imho not the sample variance.

Furthermore, the reaching probability $\rho(\Sigma)$ should be specified somewhere. So result collection might be more necessary than the State Classes.}
}

\hide{{\color{red}Also explain that there are other estimators, e.g., Wilson's score interval which work better for a small sample size and low mean

... the Wilson score interval~\cite{TODO}, which can be calculated with:
\begin{align*}
	\left[\frac{\overline{X} + \frac{z_{\alpha/2}}{2n} - z_{\alpha/2}\sqrt{\frac{\overline{X}(1 - \overline{X})}{n} + \frac{z_{\alpha/2}}{4n^2}}}{1 + \frac{z_{\alpha/2}}{n}}, \frac{\overline{X} + \frac{z_{\alpha/2}}{2n} + z_{\alpha/2}\sqrt{\frac{\overline{X}(1 - \overline{X})}{n} + \frac{z_{\alpha/2}}{4n^2}}}{1 + \frac{z_{\alpha/2}}{n}}\right]
\end{align*}
}}

\section{Switch from Stochastic State Classes to Simulation}
\label{sec:introducing-switch-sscs}
In this section, we provide an overview of our approach, resorting to an initial \ac{SSC} expansion and simulation afterwards (\Cref{sec:switch-overview}); we explain how an \ac{SSC} is conditioned on a transition firing to start a simulation offspring from it (\Cref{sec:conditioning-sscs}); and, we describe how samples are created from a conditioned \ac{SSC} (\Cref{sec:sampling-methods}).

\hide{This section elaborates on the proposed analysis method that resorts to an initial \ac{SSC} expansion and simulation afterwards. We describe the general concept (\Cref{sec:switch-overview}), explain how we determine a subzone of an original \acp{SSC} from which we start a simulation offspring (\Cref{sec:conditioning-sscs}) and finally cover how we create samples from a subzone of an \acp{SSC} (\Cref{sec:sampling-methods}).}


\subsection{Approach Overview}
\label{sec:switch-overview}
\markchange{We start with performing} an analytical expansion with \acp{SSC} using a predefined depth $d$, meaning that we only enumerate \acp{SSC}: \textit{i})~that can be reached at maximum after $d$ transition firings, and \textit{ii})~from which an \ac{SSC} satisfying target property~\prop can be reached with positive probability (which can be decided on the underlying \ac{SCG}---see \Cref{subsec:ssc}). 
From each \ac{SSC} $\Sigma$ at distance~$d$ from the root, for each enabled transition~$\gamma$ of~$\Sigma$ such that a state satisfying~\prop can be reached from the successor \ac{SSC} of $\Sigma$, we define a starting state for simulation offspring with pair $\langle\Sigma, \gamma\rangle$.
Let the \ac{RV} $Y_{\langle\Sigma, \gamma\rangle}$
\markchange{%
with support on $\mathbb{R}_{\geq 0}^{G+1}$ denote the age $\age$ and times-to-fire of the $G$ enabled transitions for a simulation offspring (from a $(G+1)$-dimensional \ac{SSC} $\Sigma$) when $\gamma$ fires first.
}
%
%
%
For given model time~$t$ and property~\prop, let the \ac{RV} $X_{\langle \Sigma, \gamma\rangle}$ with support on $\{0, 1\}$ denote the reward obtained when evaluating~\prop by performing a simulation run $\Psi_{\langle \Sigma, \gamma\rangle} : \mathbb{R}_{\geq 0}^{G+1} \times \mathcal{S} \rightarrow \{0, 1\}$ from pair $\langle\Sigma, \gamma\rangle$ until reaching time~$t$.
For simplicity, we omit the information of the simulation engine when referring to $\Psi_{\langle \Sigma, \gamma\rangle}$, so we write $X_{\langle \Sigma, \gamma\rangle} = \Psi_{\langle \Sigma, \gamma\rangle} (Y_{\langle\Sigma, \gamma\rangle})$.

\hide{In the developed approach, we perform at first an analytical expansion with \acp{SSC} using a predefined depth $d$, meaning that we only cover \acp{SSC} which have been reached at maximum after $d$ transition firings. From each \ac{SSC} $\Sigma$ with a distance of $d$ to the root, we define for each enabled transition $\gamma$ of $\Sigma$ a starting point for simulation offspring with a pair $\langle\Sigma, \gamma\rangle$.
We denote with the \ac{RV} $Y_{\langle\Sigma, \gamma\rangle} \in (\mathbb{R}_{0}^{+})^G$ the firing times of the enabled transitions for a simulation offspring when sampling from a $G$-dimensional \ac{SSC} $\Sigma$, under the condition that the transition $\gamma$ fires first. {\color{red}(Including the age variable, also synchronize the dimension of the SSC)} When employing \ac{IS}, the samples are associated with a likelihood, which is denoted with $L : (\mathbb{R}_{0}^{+})^G \rightarrow \mathbb{R}_{0}^{+}$.
For a predefined model time $t$, we can characterize the reward as \ac{RV} $X_{\langle \Sigma, \gamma\rangle} \in \{0, 1\}$ with $X_{\langle \Sigma, \gamma\rangle} = \Psi_{\langle \Sigma, \gamma\rangle} (Y_{\langle\Sigma, \gamma\rangle})$ outgoing from the pair $\langle\Sigma, \gamma\rangle$, where $\Psi_{\langle \Sigma, \gamma\rangle}$ corresponds to performing a simulation run until reaching model time $t$ and evaluating the target condition.}

\begin{figure}[t]
    \centering
    \includegraphics[width=\linewidth]{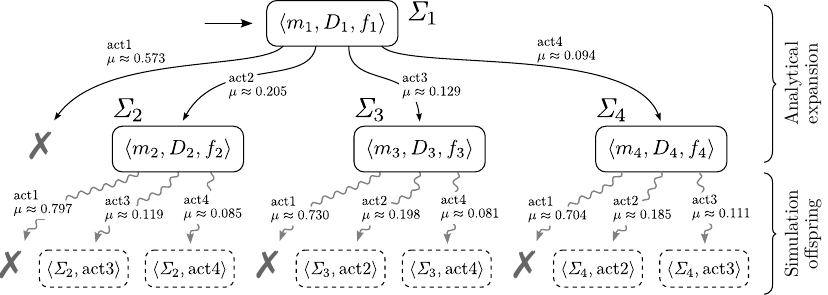}
    \caption{Illustration of the approach with analytical expansion depth $d=1$}
    \label{fig:illustration_approach}
    \vspace{-2ex}
\end{figure}

\markchange{%
Note that, other than belonging to the expolynomial class, no assumptions are made about the distributions of the \acp{RV}, nor on the presence of regeneration points.
This makes our approach applicable to a general class of stochastic processes, including e.g.\ \acp{MRP} and some non-Markovian systems.
}
For the \markchange{model}
in \Cref{fig:stpn} and expansion depth~$d=1$, \markchange{\Cref{fig:illustration_approach} illustrates the procedure:}
%
%
after the firing of transition \texttt{act1}, any further analysis can be omitted, since the target condition (i.e.,~that the activity 
\texttt{act1} is still ongoing while 
\texttt{act2}, \texttt{act3}, and \texttt{act4} have been completed) can never be reached after this event has happened.
%

\hide{For the example with four parallel overlapping activities in \Cref{fig:stpn} and expansion depth~$d=1$, an illustration of the developed procedure is given in \Cref{fig:illustration_approach}. As can be seen, any further analysis after the firing of the transition \texttt{act1} can be omitted, since the target condition can never be reached when this event has happened. This property can, e.g., be checked by running a reachability analysis upon the \ac{PN} by solely considering the markings and not any timing information.}

\subsection{Conditioning Stochastic State Classes on Fired Transition}
\label{sec:conditioning-sscs}

To define a starting state for a simulation offspring associated with pair $\langle\Sigma, \gamma\rangle$, the age variable~$\age$ and the times-to-fire of the enabled transitions are sampled from the joint PDF $f_{\langle\tau_\emph{age}, \vect{\tau}\rangle}$ of $\Sigma$ conditioned on transition~$\gamma$ firing first.
To this end, the times-to-fire of \ac{EXP} transitions can be handled independently of $\tau$ and the other times-to-fire, given that, due to the memoryless property, those \acp{RV} are independent of the other ones and, after a transition firing, each of them follows its respective \ac{EXP} distribution with the same rate~\cite{carnevali2008state}.
In detail, we only store the rates $\lambda_1, \ldots, \lambda_n$ of the involved \ac{EXP} distributions, which however can influence the evolution of the other times-to-fire, as discussed in the following.

$\Sigma$ is conditioned on transition~$\gamma$ firing first by the following steps (if \ac{EXP} transitions are not present, then steps~1, 2, and 3a are omitted and, in step~3b, variable $x_{exp}$ is not present and thus \Cref{eq:removeExp} does not need to be solved):

\hide{
The following steps are performed 
When assuming that the transition $\gamma$ has been fired first in $\Sigma$, the following steps have to be performed: {\color{red}(There can also be the case that there are no exponentially distributed transitions involved.)}}

\hide{To create the firing times $Y_{\langle\Sigma, \gamma\rangle}$ for an offspring, we need to cut out the part of the original density $f_{\langle\tau_\emph{age}, \vect{\tau}\rangle}$ from $\Sigma$ such that the resulting domain only contains those parts where the transition $\gamma$ is the first one to be fired. 
{\color{red} (Laura: say this is usually terned conditioned firing domain.)}
In this context, it is worth acknowledging that the firing times of exponentially distributed transitions {\color{red} (Laura: why not using acronym EXP?)} are handled independently of the other ones to save memory, as those variables are stochastically independent of the other variables and do not change in a firing step due to the memoryless property.
{\color{red} (Laura: add citation here.)}
In detail, we only store the rates $\lambda_1, ..., \lambda_n$ of the involved exponential distributions. However, exponentially distributed transitions can influence the timing behavior of the other transitions.

When assuming that the transition $\gamma$ has been fired first in $\Sigma$, the following steps have to be performed: {\color{red}(There can also be the case that there are no exponentially distributed transitions involved.)}}

\begin{enumerate}
    \item We calculate the aggregated \ac{EXP} distribution with rate $\lambda_{\emph{agg}} = \lambda_1 +  \ldots + \lambda_n$, being the rate of the firing time of any of these transitions (as the minimum of $n$ \ac{EXP} \acp{RV} with rates $\lambda_1, \ldots, \lambda_n$ is an \ac{EXP} \ac{RV} with rate $\lambda_1 + \ldots +  \lambda_n$).
	\item We add the aggregated \ac{EXP} time-to-fire to the joint \ac{PDF}:
    \begin{align}
        f_{\tau'}(x_\emph{age}, \vec{x}, x_\emph{exp}) = f_{\langle\tau_\emph{age}, \vect{\tau}\rangle}(x_\emph{age}, \vec{x}) \cdot e^{-\lambda_{\text{agg}}x_\emph{exp}}
    \end{align}
    where $\tau' = \langle\tau_\emph{age}, \vect{\tau}, \tau_\emph{exp}\rangle$ and $D'$ is the domain of $f_{\tau'}$.
	\item We distinguish whether the fired transition $\gamma$ is \ac{EXP} itself or not:
	\begin{enumerate}
		\item If yes, $x_\emph{exp}$ is at most the time-to-fire of each non-\ac{EXP} transition,
    		and it fires with probability~$ \mu_{\gamma}$, yielding the conditioned joint \ac{PDF} $f_{\tau''}$, with $\tau''=\langle \age', \vec{\tau}', \tau'_{exp} \rangle$ and $I=\{ i \,|\, i\neq age \wedge i \neq exp\}$:
            \begin{gather}
                \mu_{\gamma} = \frac{\lambda_{\gamma}}{\lambda_\emph{exp}} \int_{
                \begin{subarray}{l}
                \{ (x_\emph{age}, \vec{x}, x_\emph{exp}) \in D'\\
                \text{s.t. }
                x_\emph{exp} \leq x_i~\forall\,i \in I \}            
                \end{subarray}
                } 
                f_{\tau'}(x_\emph{age}, \vec{x}, x_\emph{exp}) dx_\emph{age} d\vec{x} dx_\emph{exp}\\
                f_{\tau''}(x_\emph{age}, \vec{x}, x_\emph{exp}) = \frac{\mathbf{1}_{
                \begin{subarray}{l}
                    \{ (x_\emph{age}, \vec{x}, x_\emph{exp}) \in D'\\ \text{s.t. } x_\emph{exp} \leq x_i~\forall\,i \in I \}
                \end{subarray}
                }
                (x_\emph{age}, \vect{x}, x_\emph{exp})}{\mu_{\gamma} \cdot \frac{\lambda_\emph{exp}}{\lambda_{\gamma}}} \cdot f_{\tau'}(x_\emph{age}, \vec{x}, x_\emph{exp})
            \end{gather}
            %
            
    	\item Otherwise (the fired transition $\gamma$ is not \ac{EXP}), then $x_\gamma$ must be at most $x_{exp}$ or the time-to-fire of every non-\ac{EXP} transition. It fires with probability~$\mu_\gamma$, yielding $f_{\tau''}$ with $ \tau'' = \langle \age', \vec{\tau}', \tau'_{exp} \rangle$ and $I=\{ i \,|\, i\neq age \wedge i\neq \gamma \}$: 		
        \begin{gather}
            \mu_{\gamma} = \int_{\{(x_\emph{age}, \vec{x}, x_\emph{exp})\in D' \mid x_{\gamma} \leq x_i~\forall\,i \in I \}} f_{\tau'}(x_\emph{age}, \vec{x}, x_\emph{exp}) dx_\emph{age} d\vec{x}dx_\emph{exp}\\
            f_{\tau''}(x_\emph{age}, \vec{x}, x_\emph{exp})
            = \frac{\mathbf{1}_{
            \begin{subarray}{l}
                 \{(x_\emph{age}, \vec{x}, x_\emph{exp})\in D' \\ \text{s.t. } x_{\gamma} \leq x_i~\forall\,i \in I \}
                 \end{subarray}
            }
            (x_\emph{age}, \vec{x}, x_\emph{exp})}{\mu_{\gamma}} f_{\tau'}(x_\emph{age}, \vec{x}, x_\emph{exp})
        \end{gather}
        As  \ac{EXP} times-to-fire are not affected by transition firings, $f_{\tau''}$ can be mar\-gi\-na\-lized with respect to $x_{exp}$, yielding $f_{\tau'''}$ with $ \tau''' = \langle  \age'', \vec{\tau}'' \rangle$:
        
        \begin{align}
            f_{\tau'''}(x_\emph{age}, \vec{x}) = \int_{0}^{\infty} f_{\tau''}(x_\emph{age}, \vec{x}, x_\emph{exp}) dx_\emph{exp}
            \label{eq:removeExp}
        \end{align}
	\end{enumerate}
\end{enumerate}

\hide{
\begin{enumerate}
	\item We calculate the aggregated exponential distribution with rate $\lambda_{\emph{agg}} = \lambda_1 + ... + \lambda_n$. The rate $\lambda_{\emph{agg}}$ corresponds to the event that any of these exponential transitions fires in a given period.
	\item We add the aggregated exponential distribution to the common density
	\begin{align}
		f_{\tau'}(x_\emph{age}, \vec{x}, x_\emph{exp}) = f_{\langle\tau_\emph{age}, \vect{\tau}\rangle}(x_\emph{age}, \vec{x}) \cdot e^{-\lambda_{\text{agg}}x_\emph{exp}}\text{,}
	\end{align}
	where $\tau' = \langle\tau_\emph{age}, \vect{\tau}, \tau_\emph{exp}\rangle$ and we denote the domain of $f_{\tau'}$ with $D'$.
	\item We decide if the firing transition $\gamma$ is exponential itself:
	\begin{enumerate}
		\item If yes, $x_\emph{exp}$ has to be smaller than or equal to the other transitions and $x_{\gamma}$ smaller than or equal to the other exponential transitions. To contribute to this, we calculate the firing probability with
		\begin{align}
			\mu_{\gamma} = \frac{\lambda_{\gamma}}{\lambda_\emph{exp}} \int_{\{ (x_\emph{exp}, \vec{x}, x_\emph{exp}) \in D' | x_\emph{exp} \leq x_i \}} f_{\tau'}(x_\emph{age}, \vec{x}, x_\emph{exp}) dx_\emph{age} d\vec{x} dx_\emph{exp}
		\end{align}
		and update the common density $f_{\tau, \gamma}$ with:
		\begin{align}
			\begin{split}
				&f_{\langle\tau', \gamma\rangle}(x_\emph{age}, \vec{x}, x_\emph{exp})\\
				&= \frac{\mathbf{1}_{\{ (x_\emph{age}, \vec{x}, x_\emph{exp}) \in D' | x_\emph{exp} \leq x_i \}}(x_\emph{age}, \vect{x}, x_\emph{exp})}{\mu_{\gamma} \cdot \frac{\lambda_\emph{exp}}{\lambda_{\gamma}}} \cdot f_{\tau'}(x_\emph{age}, \vec{x}, x_\emph{exp})
			\end{split}
		\end{align}
		In the end, $x_\emph{exp}$ corresponds to the transition $\gamma$, and turns into a non-exponentially distributed variable. Thus, it cannot be marginalized here. 
		\item Otherwise, $\gamma$ has to fire before any other exponentially distributed transition.
		We first compute the probability $\mu_{\gamma}$ that the transition $\gamma$ fires before the other transitions with:
		\begin{align}
			\mu_{\gamma} = \int_{\{(x_\emph{age}, \vec{x}, x_\emph{exp})\in D' \mid x_{\gamma} \leq x_i \}} f_{\tau'}(x_\emph{age}, \vec{x}, x_\emph{exp}) dx_\emph{age} d\vec{x}dx_\emph{exp}
		\end{align}
		Then, we can determine the new density to create samples from, where we scale it by using $\mu_{\gamma}$ to obtain a valid \ac{PDF}:
		\begin{align}
			\begin{split}
				&f_{\langle\tau', \gamma\rangle}(x_\emph{age}, \vec{x}, x_\emph{exp})\\
				& = \frac{\mathbf{1}_{\{(x_\emph{age}, \vec{x}, x_\emph{exp})\in D' \mid x_{\gamma} \leq x_i \}}(x_\emph{age}, \vec{x}, x_\emph{exp})}{\mu_{\gamma}} f_{\tau'}(x_\emph{age}, \vec{x}, x_\emph{exp})
			\end{split}
		\end{align}
		Since the exponentially distributed transitions remain unchanged after the firing of $\gamma$, they can be marginalized afterwards from the density:
		\begin{align}
			f_{\langle\langle\tau_\emph{age}, \vect{\tau}\rangle, \gamma\rangle}(x_\emph{age}, \vec{x}) = \int_{0}^{\infty} f_{\langle\tau', \gamma\rangle}(x_\emph{age}, \vec{x}, x_\emph{exp}) dx_\emph{exp}
		\end{align}
	\end{enumerate}
	\item We sample the exponential distributions individually, shifted by the sampled firing time of $\gamma$. If desired, we can also incorporate \ac{IS} for the creation of the samples.
\end{enumerate}
}

Given the firing probability~$\mu_\gamma$ of~$\gamma$ in \ac{SSC} $\Sigma$, we compute weight $w_{\langle\Sigma, \gamma\rangle} = \rho(\Sigma) \cdot \mu_{\gamma}$ where $ \rho(\Sigma)$ is the reaching probability of $\Sigma$, and $w_{\langle\Sigma, \gamma\rangle}$ comprises the probability of reaching the successor \ac{SSC} of $\Sigma$ through~$\gamma$ starting from $\Sigma$.
%
%
This  quantity is needed to compute the global reward $X$ by its individual rewards $X_{\langle \Sigma, \gamma\rangle}$, obtained from simulation offspring with pair $\langle \Sigma, \gamma \rangle$  (see  \Cref{sec:results-confidence-intervals}).

Besides the rates of \ac{EXP} transitions, we store the values of \ac{DET} transitions and times-to-fire of transitions with \ac{DET} time difference to the time-to-fire of another transition. Due to space limits, we refer to \cite{stochastic-state-classes,carnevali2008state} for evaluation of the firing probability~$\mu_\gamma$ and conditioned joint PDF with \ac{DET} transitions.

\hide{
Based on the firing probability for a transition $\gamma$, we can calculate the weight $w_{\langle\Sigma, \gamma\rangle} = \rho(\Sigma) \cdot \mu_{\gamma}$
%
%
which reflects the reaching probability of the corresponding \ac{SSC} with its outgoing transition. This is important for the calculation of the global reward $X$ by its individual rewards $X_{\langle \Sigma, \gamma\rangle}$ in Section~\ref{sec:results-confidence-intervals}.
Besides the rates of the \ac{EXP} transitions, we store the value of \ac{DET} transitions or a with deterministic time difference to another transition separately. 
Sampling for those transitions is straightforward. Without this setup, it would be necessary to represent the Dirac delta function in the joint PDF.}

\subsection{Sampling Methods}
\label{sec:sampling-methods}

\markchange{Given an \ac{SSC} conditioned on a transition firing, times-to-fire of \ac{EXP} transitions can be sampled individually, while inverse transform cannot be applied for their joint PDF $f_{\langle\tau_\emph{age}, \vect{\tau}\rangle}$ with $\age$, which 
takes a piece-wise expolynomial representation over a domain partition in \ac{DBM} sub-zones.
}
%
%
Therefore, we exploit two sampling methods that only need to evaluate $f_{\langle\tau_\emph{age}, \vect{\tau}\rangle}$ at certain points, namely the \ac{MH} algorithm~\cite{metropolis1953equation,hastings1970monte} and \ac{IS}~\cite{importance-sampling}. 
They operate differently (in particular, in contrast to \ac{IS}, \ac{MH} algorithm generates samples with the same likelihood) and thus yield different \ac{CI} evaluations in \Cref{sec:results-confidence-intervals}. We provide here a short description of these methods---for further details see \Cref{sec:algorithmic-desc-sampling}.

%
%


\hide{We sample the exponential distributions individually, shifted by the sampled firing time of $\gamma$. If desired, we can also incorporate \ac{IS} for the creation of the samples.
The times-to-fire of these transitions can be sampled individually.
If desired, we can also incorporate \ac{IS} for the creation of the samples. 

Although the common density $f_{\langle\tau_\emph{age}, \vect{\tau}\rangle}$ of an \ac{SSC} $\Sigma$ is represented analytically with an expolynomial, it is difficult to apply sampling methods like inverse transform sampling due to the complexity of the \ac{SSC} representation. Thus, we utilized two sampling methods which only need to evaluate the density of $f_{\langle\tau_\emph{age}, \vect{\tau}\rangle}$ at certain points, namely the \ac{MH} algorithm and \ac{IS}. The reason why we consider both is that the \ac{MH} algorithm generates---in contrast to \ac{IS}---samples with the same likelihood and allows us to estimate the estimator's variance differently in \Cref{sec:results-confidence-intervals}. For both methods we provide a short description, a detailed version can be found in \Cref{sec:algorithmic-desc-sampling}.}

\paragraph{Metropolis-Hastings Algorithm.}
\label{sec:metropolis-hastings}
Starting from an arbitrary point in the \ac{PDF} domain, we iteratively sample a new point $\vect{x}'$ from a proposal \ac{PDF} based on the last point~$\vect{x}_{t}$ (e.g.,~normal \ac{PDF} centered at $\vect{x}_{t}$), accepting $\vect{x}'$ with probability $\min(\alpha, 1)$ where $\alpha = f(\vect{x}')/f(\vect{x}_t)$ (acceptance ratio). Assessed by measuring autocorrelation of samples with the Ljung-Box test~\cite{ljung-box}, we perform undersampling to ensure a low correlation between consecutive simulation offspring.

\paragraph{Importance Sampling.}
\label{sec:importance-sampling-ssc}
We introduce a proposal \ac{PDF} $\Tilde{f}$ with a known sampler, whose domain $\Tilde{D}$ contains the original domain $D$, such as
\begin{align}
    \Tilde{f}(\vec{x}) = \prod_{i=1}^{H} \mathbf{1}_{[l_i, u_i]}(x_i) \begin{cases}
        \frac{1}{u_i - l_i} & \text{if }l_i\neq -\infty \wedge u_i\neq \infty\\
        \lambda e^{-\lambda (x_i - l_i)} & \text{else if }u_i = \infty\\
        \lambda e^{-\lambda (u_i - x_i)} & \text{else if }l_i = -\infty
    \end{cases}\text{,}
    \label{eq:importance-sampling-proposal}
\end{align}
where $l_i$ and $u_i$ are the lower and upper bound, respectively, of the marginal domain of $x_i$ with $i \in \{1, ..., H\}$, where $H \leq G + 1$ is the number of non-EXP and non-DET transitions involved, and $\lambda \in \mathbb{R}^+$ is a parameter defined by the user.
Note that $\lambda$ has to be chosen carefully (e.g. neither too low nor too high) to avoid variance explosion~\cite{LMT09}.
Also note that, as $\tau_\emph{age}$ is encoded as the opposite of the elapsed time in the calculus of \acp{SSC}, we consider the case that $l_i=-\infty$.


\hide{\paragraph{Metropolis-Hastings Algorithm.}
\label{sec:metropolis-hastings}
An illustration of the \ac{MH} algorithm is given in \Cref{fig:metropolis_hastings}. We begin with an arbitrary starting point in the domain of the function.
After that, we iteratively sample a new point $\vect{x}_{t+1}$ based on the last point $\vect{x}_{t}$. In each step, we predict a new candidate $\vect{x}'$ with the help of a proposal density where we already know a sampler. A possible choice would be a normal distribution which is centered around the old sample. Finally, we can calculate the acceptance ratio $\alpha = f(\vect{x}')/f(\vect{x}_t)$, which is used to determine with probability $\min(\alpha, 1)$ if the new point will be accepted or not. Assessed by measuring the autocorrelation of samples with the Ljung-Box test~\cite{ljung-box}, we perform undersampling to ensure that the correlation between consecutive simulation offspring is low.

\paragraph{Importance Sampling.}
\label{sec:importance-sampling-ssc}
Another possibility for creating a simulation offspring is to use \ac{IS}~\cite{importance-sampling}, which is illustrated in \Cref{fig:ssc_importance_sampling}. Thus, we introduce another density $\Tilde{f}$, where we already know a sampler and whose domain $\Tilde{D}$ contains the original domain $D$. A possible option is a proposal density with \ac{PDF}
\begin{align*}
    \Tilde{f}(\vec{x}) = \prod_{i=1}^{G+1} \mathbf{1}_{[l_i, u_i]}(x_i) \begin{cases}
        \frac{1}{u_i - l_i} & \text{if }l_i\neq -\infty \wedge u_i\neq \infty\\
        \lambda e^{-\lambda (x_i - l_i)} & \text{if }u_i = \infty\\
        \lambda e^{-\lambda (u_i - x_i)} & \text{if }l_i = -\infty
    \end{cases}\text{,}
\end{align*}
where $l_i$ and $u_i$ correspond to the lower and upper bound of the domain from the variable indexed with $i \in \{1, ..., G + 1\}$ and $\lambda \in \mathbb{R}^+$ is a parameter defined by the user.
Note that the rate $\lambda$ has to be chosen carefully, e.g. neither too low nor too high, to ensure that we are not observing a variance explosion~\cite{LMT09}.
Furthermore, as the age variable $\tau_\emph{age}$ decreases by its sojourn time, we have to consider domains in a negative infinite direction. }


\section{Collecting Results and Obtaining Confidence Intervals}
\label{sec:results-confidence-intervals}

In this section, we illustrate the derivation of \acp{CI}. 
We characterize the states that fulfill the target time-bounded transient property \prop in three cases:
\begin{itemize}
    \item 
    If \prop is satisfied by a state during simulation from pair $\langle \Sigma, \gamma\rangle$ (as described in \Cref{sec:introducing-switch-sscs}), we estimate the mean $\mu_{\langle \Sigma ,\gamma\rangle}$ of the reward with the sample mean
    \hide{The target can be covered during a simulation run employed from a pair $\langle \Sigma, \gamma\rangle$ as described in \Cref{sec:introducing-switch-sscs}. We can estimate the mean $\mu_{\langle \Sigma ,\gamma\rangle}$ of the reward from the pair $\langle \Sigma, \gamma\rangle$ with the sample mean}
    \begin{align}
        \overline{X}_{\langle \Sigma, \gamma\rangle} = \frac{1}{n_{\langle\Sigma, \gamma\rangle}} \cdot \sum_{i = 1}^{n_{\langle\Sigma,\gamma\rangle}} L(y_{\langle\Sigma, \gamma\rangle, i})\Psi(y_{\langle\Sigma, \gamma\rangle, i})
    \end{align}
    %
    %
    using simulation offspring times $y_{\langle\Sigma, \gamma\rangle, i}$, $i\in \{1, ..., n_{\langle\Sigma, \gamma\rangle}\}$, from $Y_{\langle\Sigma, \gamma\rangle}$. It converges, according to the central limit theorem, to a normal distribution:
    \begin{align}
        \overline{X}_{\langle \Sigma, \gamma\rangle} \rightarrow \mathcal{N}\left(\mu_{\langle\Sigma, \gamma\rangle}, \frac{\sigma_{\langle\Sigma, \gamma\rangle}^2}{n_{\langle\Sigma, \gamma\rangle}}\right)
        \label{eq:inidividual_sum_normal_distrib}
    \end{align}
    As described in \Cref{subsec:mcs}, if samples have the same weight (e.g., $L(y_{\langle\Sigma, \gamma\rangle, i)}) = 1$, which holds for the \ac{MH} algorithm), we estimate the variance with $\Tilde{\sigma}_{\langle\Sigma, \gamma\rangle}^2 = \overline{X}_{\langle\Sigma, \gamma\rangle}(1 - \overline{X}_{\langle\Sigma, \gamma\rangle})$. When incorporating weights of samples with \ac{IS}, we estimate the variance by the sample variance:
    \begin{align}
        \begin{split}
            \Tilde{\sigma}_{\langle \Sigma, \gamma\rangle}^2 &= \frac{1}{n_{\langle\Sigma, \gamma\rangle} - 1} \sum_{i=1}^{n_{\langle \Sigma, \gamma\rangle}} \Psi_{\langle \Sigma, \gamma\rangle}^2(y_{\langle \Sigma, \gamma\rangle, i}) L_{\langle \Sigma, \gamma\rangle}^2(y_{\langle \Sigma, \gamma\rangle, i})\\
            & \strut\quad - \frac{n_{\langle \Sigma, \gamma\rangle}}{n_{\langle \Sigma, \gamma\rangle} - 1} (\overline{X}_{\langle \Sigma, \gamma\rangle})^2
        \end{split}
        \label{eq:is-var-estimation}
    \end{align}
    \item If \prop is satisfied by \ac{SSC}~$\Sigma_\emph{det}$, then: 
    if \prop requires some property (marking condition) to be satisfied at time~$t$,
    the weight is the reaching probability $\rho(\Sigma_\emph{det})$ multiplied by the probability that $\Sigma_\emph{det}$ is the last \ac{SSC} reached at~$t$:
    \hide{Secondly, a target marking can be covered by an \ac{SSC} analytically, denoted with $\Sigma_\emph{det}$. The weight of this property is determined by the probability that a simulation offspring from the root state eventually leads up to this state multiplied by the probability that it currently stays in this \ac{SSC}:}
    \begin{align}
        w_{\Sigma_\emph{det}} = \rho(\Sigma_\emph{det}) \int_{
        \begin{subarray}{l}
        \{(x_\emph{age},\vec{x}) \in D \text{ s.t.}\\ 
        - x_\emph{age} \leq t \,\wedge\, x_i - x_\emph{age} > t\ \forall i\}
        \end{subarray}        
        } f_{\langle\tau_\emph{age}, \vec{\tau}\rangle}(x_\emph{age}, \vec{x}) dx_\emph{age}d\vec{x}
    \end{align}
    Otherwise, if \prop requires some property be satisfied within time~$t$, then $w_{\Sigma_\emph{det}}$ is derived as $\rho(\Sigma_\emph{det})$ multiplied by the probability that $\Sigma_\emph{det}$ is reached within~$t$:
    \begin{align}
        w_{\Sigma_\emph{det}} = \rho(\Sigma_\emph{det}) \int_{
        \{(x_\emph{age},\vec{x}) \in D \text{ s.t. } 
        - x_\emph{age} \leq t \}
        } f_{\langle\tau_\emph{age}, \vec{\tau}\rangle}(x_\emph{age}, \vec{x}) dx_\emph{age}d\vec{x}
    \end{align}

    \item If an \ac{SSC} satisfying \prop is unreachable from an \ac{SSC} (which can be decided using the \ac{SCG}, see \Cref{subsec:ssc}), \ac{SSC} expansion is stopped, yielding reward zero.
    
    \hide{Lastly, we can detect if the target is unreachable from an \ac{SSC}, e.g., by determining the distance from it in the corresponding \ac{SCG} as discussed in \Cref{subsec:ssc}. Whenever the distance to the target is infinite, we would never reach the target from this \ac{SSC}. As the reward of non-target states is zero, we can simply stop the exploration there.}
\end{itemize}
Combining the results of the first two cases, we estimate the total reward by \markchange{\Cref{eq:mean_mixed_analysis}, and then we can calculate the \ac{CI} for the estimator by \Cref{thm:CI}}:
\begin{align}
    \overline{X}_\text{mixed} = \sum_{\langle\Sigma, \gamma\rangle} w_{\langle\Sigma, \gamma\rangle} \cdot \overline{X}_{\langle\Sigma, \gamma\rangle} + \sum_{\Sigma_\emph{det}} w_{\Sigma_\emph{det}}
    \label{eq:mean_mixed_analysis}
\end{align}
%
\begin{theorem}[\Acl{CI} for composite analysis]\label{thm:CI}
    The $1 - \alpha$ \ac{CI} of the mean $\mu$ for the estimator in \Cref{eq:mean_mixed_analysis} can be calculated with
    \begin{align*}
        \left[\overline{X}_\text{mixed} - z_{\alpha/2}\cdot\sigma_\text{mixed}, \overline{X}_\text{mixed} + z_{\alpha/2}\cdot\sigma_\text{mixed}\right]\text{,}
    \end{align*}
    where $z_{\alpha/2}$ corresponds to the $\alpha/2$-quantile for the standard normal distribution $\mathcal{N}(0, 1)$ and $\sigma^2_\text{mixed}$ is defined as:
    \begin{align*}
    	\sigma^2_\text{mixed} = \sum_{\langle\Sigma, \gamma\rangle} \frac{w_{\langle\Sigma, \gamma\rangle}^2}{n_{\langle\Sigma, \gamma\rangle}}\cdot \sigma_{\langle\Sigma, \gamma\rangle}^2
    \end{align*}
\end{theorem}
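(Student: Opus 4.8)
The plan is to express $\overline{X}_\text{mixed}$ as an affine combination of independent offspring estimators and then invoke the closure of the Gaussian family under such combinations, exactly as the single-estimator argument of \Cref{subsec:mcs} is lifted to the composite setting. First I would observe that in \Cref{eq:mean_mixed_analysis} the second sum $\sum_{\Sigma_\emph{det}} w_{\Sigma_\emph{det}}$ is a deterministic quantity: each $w_{\Sigma_\emph{det}}$ is computed analytically from the \acp{SSC} (via the reaching probability $\rho(\Sigma_\emph{det})$ and an integral of the conditioned joint \ac{PDF}), and likewise each weight $w_{\langle\Sigma,\gamma\rangle}=\rho(\Sigma)\cdot\mu_\gamma$ is a fixed constant. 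Hence the only stochastic part of $\overline{X}_\text{mixed}$ is $\sum_{\langle\Sigma,\gamma\rangle} w_{\langle\Sigma,\gamma\rangle}\,\overline{X}_{\langle\Sigma,\gamma\rangle}$, and the analytic terms shift the mean while contributing nothing to the variance.

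Next I would establish independence across pairs. Each simulation offspring attached to a pair $\langle\Sigma,\gamma\rangle$ is produced by an independent run of the engine, started from the state obtained by conditioning $\Sigma$ on $\gamma$ firing first (\Cref{sec:conditioning-sscs}); consequently the estimators $\overline{X}_{\langle\Sigma,\gamma\rangle}$ for distinct pairs are mutually independent, and within a pair the $n_{\langle\Sigma,\gamma\rangle}$ samples are i.i.d. By \Cref{eq:inidividual_sum_normal_distrib}, each $\overline{X}_{\langle\Sigma,\gamma\rangle}$ is, for $n_{\langle\Sigma,\gamma\rangle}$ large enough, approximately $\mathcal{N}\!\left(\mu_{\langle\Sigma,\gamma\rangle},\sigma_{\langle\Sigma,\gamma\rangle}^2/n_{\langle\Sigma,\gamma\rangle}\right)$ by the central limit theorem.

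The key step is then to combine these facts: a fixed-weight linear combination of independent (approximately) normal variables plus a constant is again approximately normal. Its mean is $\mu=\sum_{\langle\Sigma,\gamma\rangle} w_{\langle\Sigma,\gamma\rangle}\mu_{\langle\Sigma,\gamma\rangle}+\sum_{\Sigma_\emph{det}} w_{\Sigma_\emph{det}}$, and by independence all cross-covariances vanish, so the variance is the $w_{\langle\Sigma,\gamma\rangle}^2$-weighted sum of the individual variances $\sigma_{\langle\Sigma,\gamma\rangle}^2/n_{\langle\Sigma,\gamma\rangle}$, which is precisely $\sigma^2_\text{mixed}$. Standardizing, $(\overline{X}_\text{mixed}-\mu)/\sigma_\text{mixed}$ is approximately $\mathcal{N}(0,1)$, whence $P\!\left(-z_{\alpha/2}\leq(\overline{X}_\text{mixed}-\mu)/\sigma_\text{mixed}\leq z_{\alpha/2}\right)=1-\alpha$; rearranging the inequality for $\mu$ yields the stated interval.

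I expect the main obstacle to be the rigorous justification of the two approximations on which the Gaussian conclusion rests: the mutual independence of the offspring estimators (which must be argued from the independence of the simulation runs and the fixed, data-independent weights) and the asymptotic normality of each $\overline{X}_{\langle\Sigma,\gamma\rangle}$. A fully careful argument should also note that the true variances $\sigma_{\langle\Sigma,\gamma\rangle}^2$ are unknown and are replaced in practice by the sample-variance estimators of \Cref{eq:is-var-estimation} (or, under equal weights, by $\overline{X}_{\langle\Sigma,\gamma\rangle}(1-\overline{X}_{\langle\Sigma,\gamma\rangle})$), so the interval is asymptotically valid rather than exact, in line with the \ac{CI} construction of \Cref{subsec:mcs}.
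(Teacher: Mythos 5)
Your proof follows essentially the same route as the paper's: invoking \Cref{eq:inidividual_sum_normal_distrib} for per-pair asymptotic normality, treating the analytic terms $w_{\Sigma_\emph{det}}$ as zero-variance (constant) contributions, and combining independent normals with fixed weights via $\mathrm{Var}(aY)=a^2\mathrm{Var}(Y)$ before standardizing to obtain the interval. Your added remarks---that independence must be argued from the independence of the simulation runs, and that replacing $\sigma_{\langle\Sigma,\gamma\rangle}^2$ by its sample estimate makes the interval asymptotic rather than exact---are sound refinements the paper leaves implicit, but they do not change the argument.
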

\begin{proof}
    The estimator in \Cref{eq:mean_mixed_analysis} consists of the weighted sum of individual estimates for each combination of \ac{SSC} and outgoing transition. As denoted in \Cref{eq:inidividual_sum_normal_distrib}, 
    each individual sum converges to a normal distribution. \markchange{The deterministic values $w_{\Sigma_\emph{det}}$ can also be interpreted as $\mathcal{N}(w_{\Sigma_\emph{det}}, 0)$.} Furthermore, for two independent random variables $Y\sim \mathcal{N}(\mu_{Y}, \sigma_{Y}^2)$ and $Z\sim \mathcal{N}(\mu_{Z}, \sigma_{Z}^2)$, the sum $Y + Z$ is distributed according to $\mathcal{N}(\mu_{Y} + \mu_{Z}, \sigma_{Y}^2 + \sigma_{Z}^2)$. Besides, $\text{Var}(aY) = a^2\text{Var}(Y)$ holds for the variance of a random variable $Y$ with constant $a\in \mathbb{R}_{\geq 0}$.
    Using this information, we obtain that the estimator in \Cref{eq:mean_mixed_analysis} converges to a normal distribution $\mathcal{N}(\mu, \sigma^2_\text{mixed})$ with:
    \begin{align*}
    	\overline{X}_\emph{mixed} \rightarrow \mathcal{N}\left(\sum_{\langle\Sigma, \gamma\rangle} w_{\langle\Sigma, \gamma\rangle}\cdot \mu_{\langle\Sigma, \gamma\rangle} + \sum_{\Sigma_\emph{det}} w_{\Sigma_\emph{det}} ~\boldsymbol{,} \sum_{\langle\Sigma, \gamma\rangle} \frac{w_{\langle\Sigma, \gamma\rangle}^2}{n_{\langle\Sigma, \gamma\rangle}}\cdot \sigma_{\langle\Sigma, \gamma\rangle}^2 \right)
    \end{align*}
    The expression $(\overline{X}_\text{mixed} - \mu)/\sigma_\text{mixed}$ then converges to a standard normal distribution $\mathcal{N}(0, 1)$ for a sufficiently large sample size. Thus, we can rearrange the formula and calculate the \ac{CI} as usual.\hfill$\square$
\end{proof}

\section{Experimental Evaluation}
\label{sec:experimental-evaluation}

In this section, we evaluate the approach with the example of \Cref{fig:stpn} (\Cref{subsec:example-4-servers}) and a \ac{DFT} case study (\Cref{subsec:example-dft}).
A prototype implementation was developed using the Sirio library~\cite{sirio-github-code},
\markchange{%
which supports \acp{STPN} and \acp{SSC}, and enables experimentation on general stochastic processes---see \Cref{subsec:stpn,sec:switch-overview}.}
Experiments were performed on an Apple M1 CPU with 16 GB RAM.

\hide{
We are evaluating the developed approach with two scenarios, the example illustrated in \Cref{fig:stpn} in \Cref{subsec:example-4-servers} as well as a \ac{DFT} in \Cref{subsec:example-dft}.
The implementation of the prototype is done with the help of the SIRIO library~\cite{sirio-github-code}, which already supports \acp{STPN} and \acp{SSC} (see \Cref{subsec:stpn}). As evaluation machine, we utilize an Apple M1 CPU with 16 GB RAM.
}

\subsection{Four Overlapping Activities}
\label{subsec:example-4-servers}

\begin{figure}[t]
    \centering
    \includegraphics{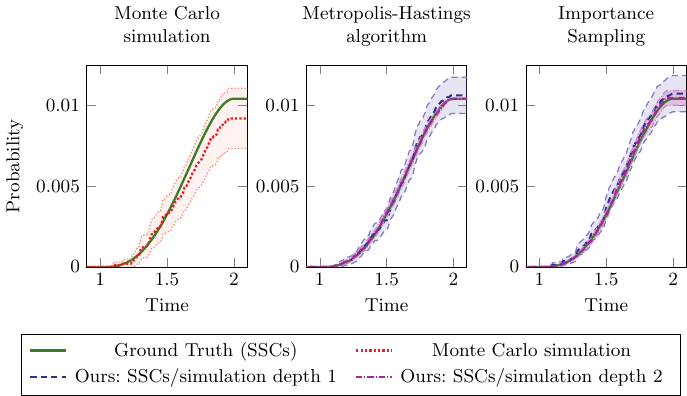}
    \caption{Transient probabilities for four overlapping activities with $95\%$ \acp{CI}}
    \label{fig:plot_parallel_failing_4_servers}
\end{figure}

The first scenario is the model with four parallel overlapping activities from \Cref{fig:stpn}, for which \Cref{fig:illustration_approach} illustrated the mixture of \acp{SSC} analysis and simulation offspring.
Although this model can be fully analyzed analytically, we use it here to investigate the effects of different \ac{SSC} expansion depths and the two sampling methods.
We study the probability that the system fails up to some model time $t$, i.e.\ the PCTL-like property ${P\left( F_{\leq t}~\mathtt{Target}=1 \right)}$.
\Cref{fig:plot_parallel_failing_4_servers} depicts the transient probabilities for this model when using different analysis techniques:
\begin{enumerate}
    \item \textcolor{Red!90!Black}{\textbf{\ac{MC} simulation}}: 10000 simulation runs
          $\rightarrow$\hfill 1085 ms analysis runtime;
    \item \textcolor{Blue}{\textbf{Ours: \acp{SSC}/simulation depth 1}}: 500 simulation offspring per SSC/tran\-sition pair
          $\rightarrow$\hfill\mbox{14476 ms (\ac{MH} algorithm)/702 ms (\ac{IS}) runtime};
    \item \textcolor{Mulberry}{\textbf{Ours: \acp{SSC}/simulation depth 2}}: 200 simulation offspring per SSC/tran\-sition pair
          $\rightarrow$\hfill\mbox{5290 ms (\ac{MH} algorithm)/800 ms (\ac{IS}) runtime};
    \item \textcolor{OliveGreen}{\textbf{Ground Truth}}: build complete \ac{SSC} tree
          $\rightarrow$\hfill \mbox{809 ms analysis runtime}.
\end{enumerate}
As can be seen, regular \ac{MC} simulation shows the widest \ac{CI} and the largest deviation from the Ground Truth, while the two other methods successively narrow the width of the \acp{CI}. In the end, the second tested method performs in total $(3\cdot 2)\cdot 500 = 3000$ simulation runs, while the third one does $(3\cdot 2\cdot 1) \cdot 200 = 1200$ simulation runs. Thus, both methods resulted in a higher accuracy while performing fewer simulation runs than crude \ac{MC} simulation.
When comparing the \ac{MH} algorithm and \ac{IS} for creating simulation offspring, we notice a high runtime difference between the two methods. This stems from the high undersampling step of $800$ needed for \ac{MH} according to the Ljung-Box test to obtain uncorrelated simulation offspring (see \Cref{sec:algorithmic-desc-sampling}).
Furthermore, with \ac{SSC} expansion depth $d=2$, sampling with \ac{MH} results in a zero-variance estimate for $t\geq 2$, while we observe non-zero variance when resorting to \ac{IS}. The reason for this is that for \ac{MH} we can estimate the variance with $\Tilde{\sigma}_{\langle\Sigma, \gamma\rangle}^2 = \overline{X}_{\langle\Sigma, \gamma\rangle}(1 - \overline{X}_{\langle\Sigma, \gamma\rangle})$, while for \ac{IS} we resort to \Cref{eq:is-var-estimation}. Since every simulation employed from depth $d=2$ always reaches the target state, the corresponding mean equals~$1$. Thus, the estimator for \ac{MH} returns variance zero, whereas we obtain for \ac{IS} non-zero variance due to the different likelihood of the samples.


\subsection{Repairable Dynamic Fault Tree}
\label{subsec:example-dft}

For a more complex example, we study a repairable \ac{DFT} following the semantics of \cite{MBD20}.  
Our model uses the AND, PAND, and SPARE gates shown in \Cref{fig:DFT,code:DFT}, as well as three repair boxes where the repair priority of the left one is first UPS, then AC.\!%
\footnote{\label{ftm:extendedDFT}%
The children of a SPARE in \cite{MBD20} are \acp{BE} or spare \acp{BE}. We extend this for cold SPAREs \cite{JKSV18}, with an inactive \emph{spare subtree}---the right AND gate---that cannot fail while the \emph{primary subtree}---the left PAND gate---is operational.
Repairs of the primary subtree reset the spare subtree state (cf.\ spare \ac{BE} semantics \cite{MBD20,budde-non-markovian-repairable-ft}).
}
This models a highly reliable system powered by an unreliable grid, which has a UPS to remain operational during the recurrent blackouts.
The UPS battery is replaced periodically: if during the replacement a blackout occurs, an emergency mechanism turns on two diesel generators.
If both generators fail during the blackout, and before the UPS battery has been replaced, a system failure occurs.
We measure the probability of system failure before 21 time units elapse, i.e.\ the transient PCTL-like property ${P\left( F_{\leq21}~\mathtt{Target}=1 \right)}$.

\begin{figure}[t]
  \centering
  \begin{minipage}[b]{.35\linewidth}
    \includegraphics[width=\linewidth]{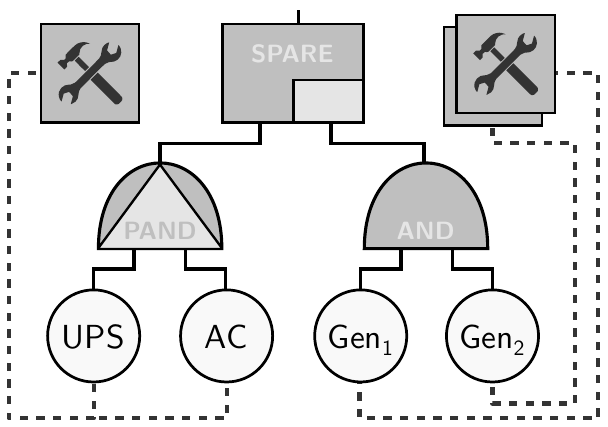}
    \caption{Repairable \ac{DFT}\cref{ftm:extendedDFT}}
    \label{fig:DFT}
  \end{minipage}
  \hfill
  \begin{minipage}[b]{.6\linewidth}
    \def\dist{\raisebox{-.7ex}{\textasciitilde}}
    \begin{lstlisting}[%
        style=Kepler,
        caption= \ac{DFT} from \Cref{fig:DFT} in Kepler syntax \cite{budde-non-markovian-repairable-ft},
        label=code:DFT,
        belowskip=-1ex,
        abovecaptionskip=\medskipamount,
        %float,
    ]
toplevel "Target";
"Target" spare "PAND" "AND";
"PAND"   pand  "UPS"  "AC";
"AND"    and   "Gen1" "Gen2";
"UPS"  fail`\dist`uni(^9.95,12^)        repair`\dist`exp(^5^);
"AC"   fail`\dist`uni(^18,20^)          repair`\dist`uni(^0,0.5^);
"Gen1" fail`\dist`exp(^1^) dorm`\dist`dir($\infty$) repair`\dist`uni(^1,2^);
"Gen2" fail`\dist`exp(^2^) dorm`\dist`dir($\infty$) repair`\dist`uni(^2,4^);
"R_PAND" rbox prio "UPS" "AC";
"R_GEN1" rbox prio "Gen1";
"R_GEN2" rbox prio "Gen2";
    \end{lstlisting}
  \end{minipage}
\end{figure}

\hide{%
The critical part of the analysis of this system is to detect the failure of the PAND gate, as this is only caused in the time interval when the failing time for UPS is twice near the lower bound, while the failing time for AC is near the upper bound. This causes regular \ac{MC} simulation to fail because of the low reaching probability, the same holds for \ac{ISPLIT}, at least if the \ac{IF} is solely based on the marking without incorporating any timing information.
\ac{IS} can work, but requires human insight to adjust the \acp{PDF} of the failure times of UPS and AC, and functional change of measures are hard to achieve. The analytical expansion with \acp{SSC}, however, can cover the PAND analytically and resort to simulation for the generators. In detail, we perform the following experiments, whose results can be seen in \Cref{fig:plot_dft}:
}

This \ac{DFT} has two failure modes, with the AND modeling a race condition of \ac{EXP} random variables and uniformly distributed revert transitions---\emph{repairs}---analyzable, e.g.,\ via \ac{MC} or \ac{IS}.
In contrast, PAND failures require UPS to fail before AC, and are reverted when AC is repaired.
The failure and repair distributions of the \acp{BE} make PAND failures a rare event that rules out \ac{MC} analysis.
Time-agnostic \ac{ISPLIT} approaches such as \cite{budde-non-markovian-repairable-ft} are equally impractical.
\ac{IS} could work, but requires non-trivial proposal failure \acp{PDF} of UPS and AC, which result in more frequent failures in the proper order.
Instead, expansions with \acp{SSC} can cover the PAND analytically, and afterwards, one can resort to \ac{MC} or \ac{IS} to study AND failures.
We experiment with these different approaches as follows:
\begin{enumerate}
    \item \textcolor{JungleGreen}{\textbf{Importance Sampling}}: We set up a mixture of \acp{PDF} for the failure of UPS and AC, to increase the probability of their ordered failure before repair:
    \vspace{-1.5ex}  
    \begin{align*}
        \text{UPS} \sim \begin{cases}
            \text{Unif}(9.95, 10) & \text{with 50\%},\\
            \text{Unif}(10, 12)   & \text{with 50\%};
        \end{cases} \qquad
        \text{AC} \sim \begin{cases}
            \text{Unif}(18, 19.9) & \text{with 50\%},\\
            \text{Unif}(19.9, 20) & \text{with 50\%}.
        \end{cases} \\[-5ex]  
    \end{align*}
    We perform 50000 simulation runs and observe a runtime of 44.7 s. 
    \item \textcolor{Blue}{\textbf{Ours: SSC/simulation depth 5}}: We perform 10000 simulation offspring per \acp{SSC}/transition pair, observing a runtime of 7.8 s. 
    \item \textcolor{Mulberry}{\textbf{Ours: SSC/simulation depth 12}}: 
    As above, with 41.0 s of runtime. 
\end{enumerate}

\begin{figure}[t]
    \centering
    \includegraphics{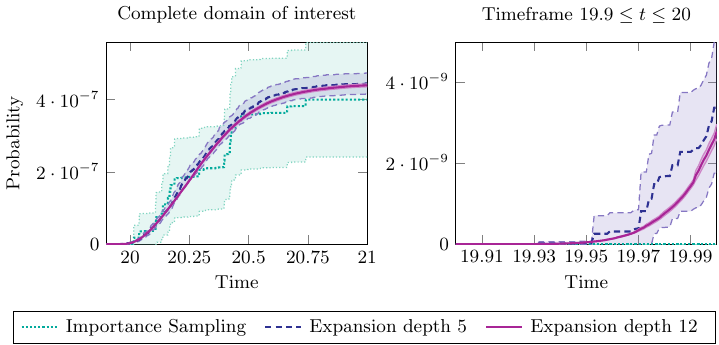}
    \caption{Transient probabilities for repairable \ac{DFT} example with $95\%$ \acp{CI}}
    \label{fig:plot_dft}
    \tikz[overlay]{
        \node (LOG) at (3.2,4.5) {\includegraphics[width=.19\linewidth]{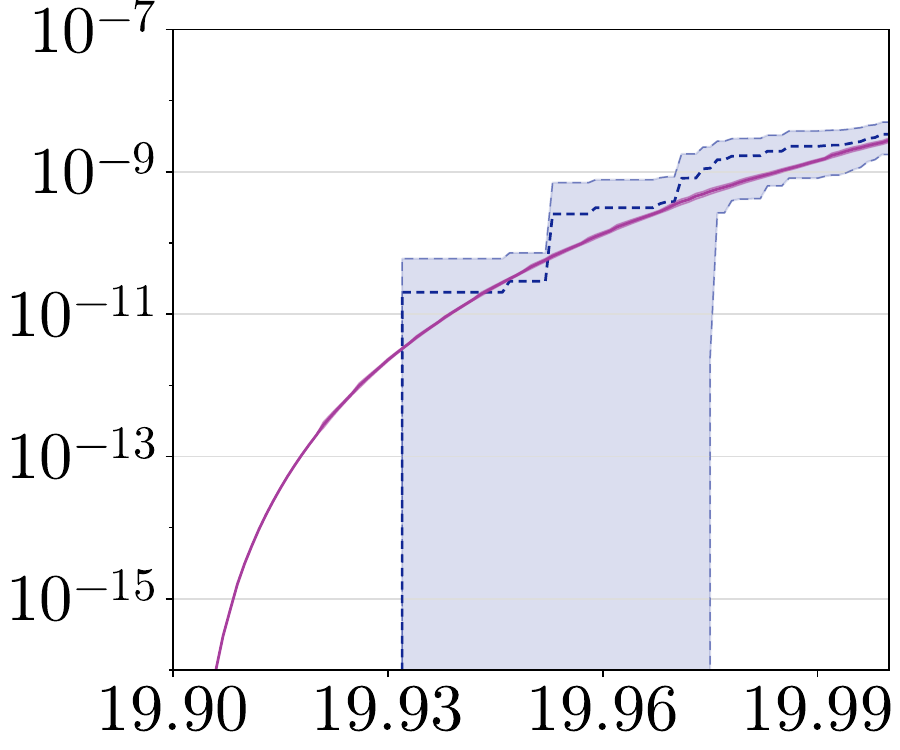}};
        \node [above=-7pt of LOG] {\smaller[2]~~~~logscale $y$-axis};
    }
\end{figure}

In our combined \ac{SSC}/simulation approach we use only \ac{IS} for creating simulation offspring, as the need for a high undersampling step renders \ac{MH} impractical.
\ac{SSC} expansions of depth $\geqslant 5$ suffice to analytically cover the PAND gate failure---i.e.\ when the second UPS failure occurs before the AC failure---resulting in a useful mixture of \acp{SSC} and simulation.
However, too-high expansion depths beyond $12$ lead to a runtime explosion for \ac{SSC} analysis, mainly due to the complexity stemming from splitting the analytical representation into subdomains.

\Cref{fig:plot_dft} shows how our mixed \ac{SSC}/simulation analysis results in a higher accuracy than \ac{IS}.
Moreover, applying \ac{IS} to this example required non-trivial human insight, and the beneficial (or detrimental) effect of the proposed \acp{PDF} is not immediately clear, as opposed to the choice of the \ac{SSC} expansion depth.
In that last respect, expanding up to depth $12$ results in a (partial) analytical coverage of the target states, which is why the mixed SSC/simulation approach can detect probabilities below $10^{-14}$ near $t = 19.9$---see the right plot in \Cref{fig:plot_dft}.

\section{Conclusion}\label{sec:conclusion}

In this paper, we presented an approach to compute transient probabilities by first performing an analytical expansion with \acp{SSC} and then resorting to simulation. 
We presented two different solutions to create simulation offspring, \markchange{namely the \ac{MH} algorithm and the \ac{IS} method, where the latter seems to be preferable due to the fact that it does not require undersampling}.
\markchange{For both methods, we} demonstrated how to derive confidence intervals. 
Our prototype was evaluated with two examples: (a) four overlapping parallel activities and (b) a repairable \ac{DFT}, where the latter demonstrated how this composite analysis method can outperform classic \ac{RES} techniques such as \ac{IS} and \ac{ISPLIT}.

The main drawback of the method for the applicability in \ac{RES} is that it requires the critical event (that makes standard simulation impractical) to occur near the root of the state space.
We see two ways to overcome this limitation.
First, one could introduce multilevel switching from simulation to \acp{SSC}, similarly to \ac{ISPLIT}, whenever a critical region is encountered.
The challenging part is to detect when such a region occurs.
Secondly---for specific model classes---we can exploit regenerations and run our approach for each regenerative epoch.

Furthermore, sensitivity studies would help to quantify the effect of the number of samples on the tradeoff between accuracy and time gain, e.g.\ for the example in \Cref{subsec:example-dft}.
In this context, one challenge is to design more robust methods for deriving \acp{CI}. When dealing with a low probability or a small sample size, the target might never hit and we falsely assume a zero-variance with the variance estimators, e.g. $\overline{\sigma}_\emph{sim}^2 = \overline{X}_\emph{sim} (1 - \overline{X}_\emph{sim})$ when using the \ac{MH} algorithm to create simulation offspring.
Analyses via standard \ac{MC} simulation can use the Wilson score interval in such situations, as it provides better \ac{CI} coverage~\cite{wilson-interval}---we envision that similar techniques might be applicable to the combined \acp{SSC}/simulation approach as well.

{\paragraph{Acknowledgments.}
We thank Pedro R. D'Argenio for the fruitful discussion that helped us with the derivation of \aclp{CI}.
This work was partially funded by DFG grant 389792660 as part of \href{https://perspicuous-computing.science}{TRR~248 -- CPEC}, the European Union under the INTERREG North Sea project \emph{STORM\_SAFE} of the European Regional Development Fund, the NRRP ``Telecommunications of the Future'' (PE00000001, \emph{RESTART}), MSCA grant 101067199 (\emph{ProSVED}), and the Next\-GenerationEU pro\-jects P2022A492B (\emph{ADVENTURE}) and D53D230084\-00006 (\emph{SMARTITUDE}) under 
the MUR PRIN 2022 PNRR.
}

\paragraph{Data Availability Statement.} A reproduction package for our experiments (i.e.\ the artifact of this paper) is available at \texttt{\color{blue}\href{https://doi.org/10.6084/m9.figshare.25665198}{10.6084/m9.figshare.25665198}}.

\appendix

\section{Stochastic Time Petri Net of Dynamic Fault Tree}
\label{sec:pn-dft-example}

This section describes the \ac{STPN} crafted for the repairable \ac{DFT} example analyzed in \Cref{subsec:example-dft}, which is illustrated in \Cref{fig:stpn-dft}.
\begin{figure}[b]
    \centering
    \includegraphics[width=\linewidth]{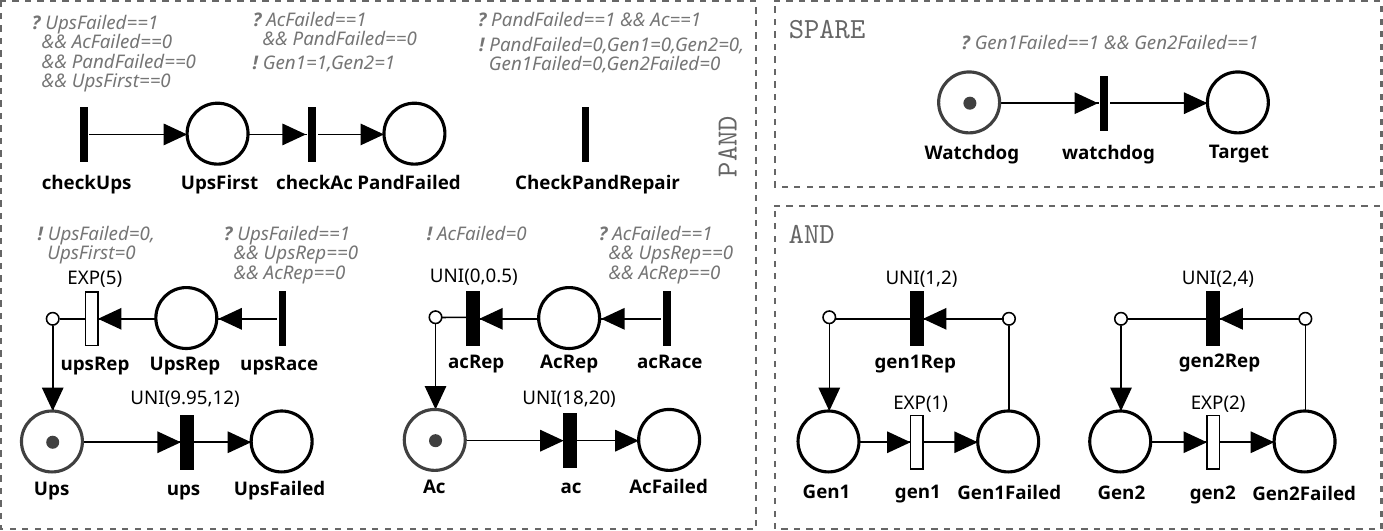}
    \caption{\ac{STPN} of the \ac{DFT} shown in \Cref{fig:DFT}.
    }
    \label{fig:stpn-dft}
\end{figure}
The initial marking of the \ac{STPN} is $\texttt{Ups} = 1, \texttt{Ac} = 1$ with enabled transitions \texttt{ups} and \texttt{ac}, which characterize the failure times of the battery and the power system, respectively. Since both components share the same repair box, we need to implement a logic that ensures the mutual exclusion of the repairs. This is done using the transitions \texttt{upsRace} and \texttt{acRace}, which marks the beginning of a repair process by adding a token to \texttt{UpsRep} and \texttt{AcRep}, respectively. Both \texttt{upsRace} and \texttt{acRace} can only fire when there is no repair process ongoing (which occurs when places \texttt{AcRep} and \texttt{UpsRep} are empty, respectively). The repair itself is modeled by transitions \texttt{upsRep} and \texttt{acRep}, which removes the token from the place representing the corresponding failure (\texttt{UpsFailed} and \texttt{AcFailed}, respectively), ending the repair process (represented by the firing of transitions \texttt{upsRep} and \texttt{acRep}, respectively), and updating the failure conditions of the PAND gate (modeled by places \texttt{UpsFirst} and \texttt{PandFailed}).
Transition \texttt{checkUps} fires when the power system has failed, but not the battery, which ensures the failure order of the PAND gate.
The activation of both diesel generators is represented by the firing of \texttt{checkAc}, which adds a token to \texttt{Gen1} and \texttt{Gen2}.
The repair of the PAND is modeled by transition \texttt{CheckPandRepair}, which fires after a failure of the PAND gate as soon as the battery is repaired, and switches off both diesel generators (by removing tokens from places \texttt{Gen1}, \texttt{Gen2}, \texttt{Gen1Failed}, and \texttt{Gen2Failed}).
Similarly to the power system and the battery, the failure of diesel generators is modeled by transitions \texttt{gen1} and \texttt{gen2}, respectively.
Since both diesel generators have an individual repair box, no synchronization is necessary, and thus transitions \texttt{gen1Rep} and \texttt{gen2Rep} suffice.
Finally, the transition \texttt{watchdog} keeps track of the failure condition that both diesel generators fail and thus finally the SPARE gate, and adds a token to place \texttt{Target} modeling a system failure.


\section{Implementation Details of Sampling Methods}
\label{sec:algorithmic-desc-sampling}

We provide implementation details and parameters used for the sampling methods for creating simulation offspring from \acp{SSC}.

For the \ac{MH} algorithm, we begin with an arbitrary starting point inside the domain $D$ of the \ac{SSC} $\Sigma$. We obtain this point by uniform sampling from the multidimensional rectangle $\prod_{i=1}^{H} [l_i, u_i]$, where $l_i$ and $u_i$ are the minimum and maximum possible value of a variable, and checking if the obtained point lies inside $D$. When we are dealing with an infinite domain, we cap the lower and upper bound to a predefined value (however, this is not relevant here, as this case does not happen in the example presented in \Cref{fig:stpn}).
In each step, we predict a new candidate $x'$ with the help of a proposal \ac{PDF} $g(x\mid y)$.
We can calculate the acceptance ratio based on the evaluated \ac{PDF} of the new candidate $x'$ and the old point $x_{t}$. The acceptance ratio $\alpha$ is used to determine if the new point will be accepted or not.
In our implementation, we use for $g(x\mid y)$ a separate normal distribution for each dimension that is centered around $y$ with variance $\sigma$. To determine $\sigma$, we start with $\sigma = 1$ and perform a binary search-like warm-up procedure to ensure that $\sigma$ is configured in a way such that the acceptance ratio lies on average between $0.2$ and $0.3$, which is compliant with the regular rule-of-thumb acceptance ratio of $\approx 0.234$~\cite{mh-acceptance-ratio}.
%
%
In detail, we perform $100$ rounds, where in each round we perform $100$ \ac{MH} steps and calculate the mean acceptance ratio. Depending on whether the acceptance ratio lies below $0.2$ or above $0.3$, we divide or multiply $\sigma$ by $1.25$ or leave $\sigma$ unchanged.
Furthermore, we determine an undersampling step for the obtained samples to ensure that the obtained samples are uncorrelated. We start with an undersampling step of $100$, assess the autocorrelation with the multivariate Ljung-Box test~\cite{ljung-box}, and increase the undersampling step incrementally by $100$. In detail, we resorted to the Microsoft WPA library for R\footnote{See function documentation at: \url{https://microsoft.github.io/wpa/reference/LjungBox.html}}, using $10^5$ samples for each simulation start point when using an analytical expansion depth of $d=1$ in the example of four overlapping activities shown in \Cref{fig:stpn}. 
Results indicate that an undersampling step of $800$ is necessary to obtain uncorrelated samples.

For \ac{IS}, as the variables of the proposal density in \Cref{eq:importance-sampling-proposal} are stochastically independent, we handle each variable individually. Thereby, we calculate the likelihood for each variable by comparing the marginal densities of the proposal density and the true density and multiply these values to obtain the global likelihood. We reject samples that are only part of the proposal domain and not the original one to obtain only samples with non-zero likelihood. This makes it necessary to scale the likelihood of each sample afterwards by the probability that a sampled point from the proposal \ac{PDF} $\Tilde{f}$ lies in the original domain $D$, which can be calculated with $\int_{\vec{x} \in D} \Tilde{f}(\vec{x}) d\vec{x}$.
We resorted to $\lambda = 1$, as the repairable \ac{DFT} example in \Cref{fig:DFT} deals with \ac{EXP} transitions with rate $1$ or $2$.


\begin{acronym}[ABCDEFGHIJK]
    \acro{AP}{atomic proposition}
    \acro{BE}[BE]{Basic Element}
    \acro{BFS}[BFS]{breadth-first search}
    \acro{BM}[BM]{Bernstein mixed polynomial and expolynomial}
    \acro{BP}[BP]{Bernstein polynomial}
    \acro{CDF}[CDF]{Cumulative Distribution Function}
    \acro{CI}[CI]{Confidence Interval}
    \acro{CSL}{continuous stochastic logic}
    \acro{CTMC}[CTMC]{continuous time Markov chain}
    \acro{DET}[DET]{deterministic}
    \acro{DBM}[DBM]{Difference Bounds Matrix}
    \acro{DFT}[DFT]{Dynamic Fault Tree}
    \acro{DSPN}[DSPN]{deterministic and stochastic Petri net}
    \acro{DTMC}[DTMC]{discrete time markov chain}
    \acro{EXP}[EXP]{exponential}    
    \acro{ERTMS}[ERTMS]{European Rail Traffic Management System}
    \acro{ES}[ES]{expected success}
    \acro{ETCS}[ETCS]{European Train Control System}
    \acro{FE}[FE]{fixed effort}
    \acro{FS}[FS]{fixed success}
    \acro{FTA}[FTA]{fault tree analysis}
    \acro{FT}[FT]{fault tree}
    \acro{FW}[FW]{Floyd-Warshall}
    \acro{GEN}[GEN]{general}
    \acro{GSPN}[GSPN]{generalized stochastic Petri net}
    \acro{IF}[IF]{importance function}
    \acro{IMM}[IMM]{immediate}
    \acro{IOSA}[IOSA]{Input/Output Stochastic Automata}
    \acro{IS}[IS]{Importance Sampling}
    \acro{ISPLIT}[ISPLIT]{Importance Splitting}
    \acro{JPD}[JPD]{joint probability distribution}
    \acro{MCMC}{markov chain Monte Carlo}
    \acro{MC}[MC]{Monte Carlo}
    \acro{MH}{Metropolis-Hastings}
    \acro{MRP}[MRP]{Markov Regenerative Process}
    \acro{MRnP}[MRnP]{Markov renewal process}
    \acro{MSE}[MSE]{mean-squared error}
    \acro{NDA}{non-deterministic analysis}
    \acro{PDF}[PDF]{Probability Density Function}
    \acro{PN}[PN]{Petri net}
    \acro{RESTART}[RESTART]{repetitive simulation trials after reaching thresholds}
    \acro{RES}[RES]{Rare Event Simulation}
    \acro{SBE}[SBE]{Spare Basic Element}
    \acro{SC}[SC]{state class}
    \acro{SCG}[SCG]{State Class Graph}
    \acro{SEQ}[SEQ]{sequential Monte Carlo}
    \acro{SMC}[SMC]{Statistical Model Checking}
    \acro{SSC}[SSC]{Stochastic State Class}
    \acro{STA}[STA]{stochastic timed automaton}
    \acro{STPN}[STPN]{Stochastic Time Petri Net}
    \acro{TA}[TA]{timed automaton}
    \acro{TPN}[TPN]{Time Petri Net}
    \acro{VI}[VI]{Value Iteration}
    \acro{i.i.d.}[i.i.d.]{independent and identically distributed}
    \acro{w.l.o.g.}[w.l.o.g.]{without loss of generality}
    \acro{RV}[RV]{Random Variable}
    \acroplural{BE}[BEs]{Basic Elements}
    \acroplural{BM}[BMs]{Bernstein mixed polynomials and expolynomials}
    \acroplural{BP}[BPs]{Bernstein polynomials}
    \acroplural{CI}[CIs]{Confidence Intervals}
    \acroplural{CTMC}[CTMCs]{continuous time Markov chains}
    \acroplural{DBM}[DBMs]{difference bound matrices}
    \acroplural{DTMC}[DTMCs]{discrete time Markov chains}
    \acroplural{FT}[FTs]{fault trees}
    \acroplural{GSPN}[GSPNs]{generalized stochastic Petri nets}
    \acroplural{MRP}[MRPs]{Markov Regenerative Processes}
    \acroplural{MRnP}[MRnPs]{Markov renewal processes}
    \acroplural{PN}[PNs]{Petri Nets}
    \acroplural{RV}[RVs]{Random Variables}
    \acroplural{SC}[SCs]{State Classes}
    \acroplural{SMP}[SMPs]{Semi-Markov Processes}
    \acroplural{SSC}[SSCs]{Stochastic State Classes}
    \acroplural{STA}[STAs]{stochastic timed automata}
    \acroplural{STPN}[STPNs]{Stochastic Time Petri Nets}
    \acroplural{TA}[TAs]{timed automata}
    \acroplural{TPN}[TPNs]{Time Petri Nets}
\end{acronym}

%
%
%
\bibliographystyle{splncs04}
\bibliography{references.bib}

\end{document}